\newcommand{\degree}{\operatorname{degree}}
\newtheorem{theorem}{Theorem}[section]
\newtheorem{lemma}[theorem]{Lemma}
\newtheorem{problem}[theorem]{Problem}
\def\QED{\ensuremath{{\square}}}
\def\markatright#1{\leavevmode\unskip\nobreak\quad\hspace*{\fill}{#1}}
\newenvironment{proof}
  {\begin{trivlist}\item[\hskip\labelsep{\bf Proof.}]}
  {\markatright{\QED}\end{trivlist}}
\title{On Optimal Coverage of a Tree with Multiple Robots\thanks{
			\parbox{.1\textwidth}{
				\protect \includegraphics[trim=10cm 6cm 10cm 5cm,clip,scale=0.15]{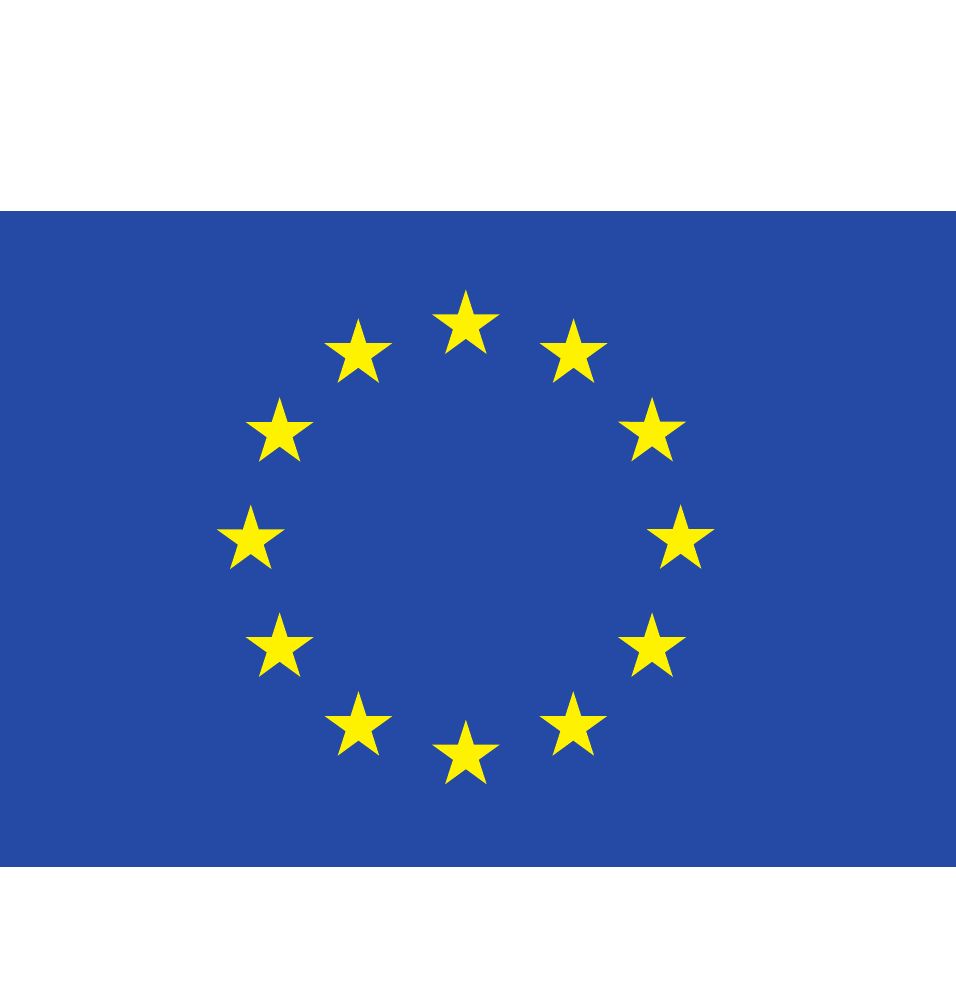}
			}
			\parbox{.9\textwidth}{ This work has also received funding from the European Union's Horizon 2020 research and innovation programme under the Marie Sk\l{}odowska-Curie grant agreement No 734922.}}
	}
\author{I. Aldana-Galv\'an\thanks{Instituto de Matem\'aticas, UNAM, Mexico City, Mexico. ialdana@ciencias.unam.mx}
\and
J.C. Catana-Salazar\thanks{Instituto de Matem\'aticas, UNAM, Mexico City, Mexico. catanas@uxmcc2.iimas.unam.mx}
\and
J.M. D\'iaz-B\'a\~nez\thanks{Departamento de Matem\'atica Aplicada II, Universidad de Sevilla, Spain. dbanez@us.es. Partially supported by project GALGO (Spanish Ministry of Economy and Competitiveness, MTM2016-76272-R AEI/FEDER,UE).}
\and
F. Duque\thanks{Instituto de Matem\'aticas, Universidad de Antioquia, Medell\'in, Colombia. rodrigo.duque@udea.edu.co}
\and
R. Fabila-Monroy\thanks{Departamento de Matem\'aticas, CINVESTAV, Mexico City, Mexico. ruyfabila@math.cinvestav.edu.mx.  Partially supported by Conacyt of Mexico grant
253261.}
\and
M.A. Heredia\thanks{Departamento de Sistemas, UAM-Azcapotzalco, Mexico City, Mexico. hvma@azc.uam.mx}
\and
A.  Ram\'irez-Vigueras\thanks{Instituto de Matem\'aticas, UNAM, Mexico City, Mexico. adriana.rv@im.unam.mx}
\and
J. Urrutia\thanks{Instituto de Matem\'aticas, UNAM, Mexico City, Mexico. urrutia@matem.unam.mx.
Partially supported by Proyect PAPIIT IN102117 from the Universidad Nacional Aut\'onoma de M\'exico. }
}
\begin{document}
\maketitle

\abstract{We study the algorithmic problem of optimally covering a tree
with $k$ mobile robots. The tree is known to all robots, and our goal is to assign a walk to each robot  in such a way that the union of these walks covers the whole tree. We assume that the edges have the same length, and that traveling along an edge takes a unit of time.
Two objective functions are considered: the cover time and the cover length. The cover time is the maximum time a robot needs to finish its assigned walk and
the cover length is the sum of the lengths of all the walks. We also consider a variant in
which the robots must rendezvous periodically at the same vertex in at most a certain number of moves. 
We show that the problem is different for the two cost functions. 
For the cover time minimization problem, we prove that the problem is NP-hard when $k$ is part of the input, regardless of whether periodic
rendezvous are required or not.
For the cover length  minimization problem, we show that it 
can be solved in polynomial time when periodic rendezvous are not required, and it is NP-hard otherwise.
}
\medskip

\noindent\textbf{Keywords}: Combinatorial Optimization; Tree Cover; Vehicle Routing; Dynamic Programming; Multi-Agent Systems.

\section{Introduction}

Operations research techniques have long been used in robotics. For instance, trajectory optimization and motion planning are application areas in which operations research is widely used \cite{hall1999,fliege2012}.
For many robotic applications, terrain coverage is a crucial task; for instance,
in search and rescue \cite{hollinger2009}, 
lawn mowing \cite{arkin2000}, 
  and surveillance by unmanned aerial vehicles \cite{acevedo2014one,diaz2017}, to name a few. 
Naturally, coverage can be sped up with multiple robots, turning the problem into a multi-robot coverage problem, in which a path has to be calculated for each robot.

Multi-agent problems have been studied for many years in combinatorial optimization. 
Computing optimal sets of routes to be covered by sets of robots to guard or cover terrains is a crucial problem in areas of research such as vehicle touting (VRP) \cite{toth2002vehicle, cordeau2007vehicle, laporte2009fifty}, location routing (LRP) \cite{nagy2007location} and multiple traveling salesman (mTSP) \cite{bektas2006multiple}), which are of prime interest in operations research.
In this paper, we will study problems in which the agents are unmanned aerial vehicles (UAVs) (also known as drones) which cover terrains, and, additionally, could meet periodically to share information.

In the terrain coverage problem, the environment can be modeled by a geometric structure, represented as the union of polygonal
obstacles, or a graph structure. The former model assumes that the robots know everything within their sight, 
and thus the problem is related to art gallery problems. This model is popular in the
computational geometry community \cite{urrutia2000}. For the latter model, the
terrain is partitioned into cells, inducing a graph whose nodes correspond to locations in the cells
and edges correspond to paths between the locations. 
In this paper, we consider the graph model, assuming that the underlying graph is a tree.  In fact, a spanning tree has 
frequently been used for multi-robot coverage \cite{hazon2008redundancy}. This appears, for instance, when considering the dual graph of a triangulation of the terrain.

Choset \cite{choset2001} provides a survey of coverage algorithms 
that distinguishes between off-line algorithms, in which a map of the work area is given to the robots, and on-line algorithms, 
in which no map is given. Two variants can also be considered according to the cost of movement: the cost can be uniform, when the move of a robot to a 
neighboring state takes unit time, or non-uniform otherwise.  
The problems introduced in this paper assume an off-line/uniform-cost scenario. We also consider the variant in which the robots are required to meet at most every $p$ 
steps for some fixed positive integer $p$.  
We assume that all robots rendezvous at the same vertex at the same time.
This rendezvous version is motivated by papers such as \cite{meghjani2011combining,hollinger2012multirobot}.

Regarding objective functions, it is frequently desirable to minimize the time at which coverage is completed. In this case, the multi-robot coverage
problem calls to compute a walk for each robot so that
the \emph{cover time}  is minimized. However, the energy efficiency of a robot's walk can also take into account the distance travelled.
In this paper we also consider another cost measure, the \emph{cover length}, meaning the sum of the lengths of all the walks needed to cover the tree. 
Note that these cost functions are different because when a robot stops and remains at a vertex, the overall cover time may increase but the cover length does not.

Before the statement of the optimization problems, we introduce some notation and assumptions for our model. 
Suppose we are given a terrain modeled by a tree $T=(V,E)$, where $V$ and $E$ are the vertices and edges sets, respectively,  that have to be covered by $k$ identical robots modeled by moving points on the tree.
The terrain is discretized by means of $n$ convex cells and each vertex of $T$ represents a cell in the terrain. 
We allow two or more robots to share a vertex or a point on an edge of $T$, without colliding or blocking each other.
We also assume that the robots walk along the edges of $T$, and that it takes one unit of time to traverse an edge.
Specifically, in one unit of time (in a step), a robot can move from one vertex to an adjacent vertex or it can stay on the same vertex.
Finally, we assume that all robots share an internal synchronized clock.
We call the journey made by a robot a \emph{walk}. Although a walk $W$ is a set of edges and vertices, we denote a walk by the sequence of $m$ visited vertices; i.e.,
 $W:=(u_1,\dots,u_m)$ where $u_i$ are vertices
of $T$ and two consecutive elements are either adjacent or equal. 

When they are equal,
the robot stays at the same vertex during the move. For convenience, we assume
that the robot stops at the last vertex of the sequence. We refer to such a sequence as a 
\emph{walk} on $T$. 
Similarly, a \emph{path} in this paper is a walk in which each
vertex is visited once. 
We denote the set of vertices of a tree $T$ (respectively a walk) as $V(T)$ (respectively $V(W)$).

The \emph{time} of a walk $W$, $t(W)$, is the number of steps that a robot needs to carry out the journey described by $W$. Formally,
the time of a walk is one less than the number of terms in $W$; $t(W)=m-1$.
The \emph{length} of a walk $W$, $l(W)$, is the number of times the robot changes its position. Formally, this is the number of times that $u_i\neq u_{i+1}$, for $i=1,\cdots,m-1$.

A \emph{strategy} is an assignment of a walk to each robot.  Formally, it is a $k$-tuple $S:=(W_1,\dots,W_k)$ of $k$ walks on $T$, where $W_i$ is the walk assigned to the $i$-th robot. We say that $S$ is a \emph{covering}
strategy if each vertex of $T$ is in at least one walk $W_i$. That is, a $\emph{covering strategy}$ is an assignment of a walk to each robot such that every vertex is visited at least once.


The robots rendezvous at time $j$ if all of them are at the same vertex at that time.
Thus, for a given strategy, a set of robots rendezvous if the $j$-th term on their walks is the same. If rendezvous is required, we also assume that the robots meet at the end of their walks. This may be necessary in scenarios in which all participants need full knowledge of a solution.

The \emph{time} of a strategy $S$, $t(S)$, is the total
time it takes for the robots to carry out the journey described
by their assigned walks.
Since the robots move in parallel we have:
\[t(S)=\max_{1\leq i \leq k}\{t(W_i)\}.\]

 The \emph{length} of a strategy $S$, $l(S)$, is the sum of the lengths
of all the walks. Thus,
\[l(S)=\sum_{i=1}^k l(W_i).\]

In this paper, we consider the following optimization problems according to the tree/off-line/uniform-cost/rendezvous framework:

\begin{itemize}
 \item \textbf{Minimum Length Covering Problem (MLCP)}
 Find a minimum length covering strategy with $k$ robots starting at given positions. 
 
 \item \textbf{Minimum Length Covering Problem with Rendezvous (MLCPR)}
   Find a covering strategy of minimum length with $k$
  robots at given starting positions, in which the robots rendezvous at most every $p$ steps, for a given positive integer $p$. 

 \item \textbf{Minimum Time Covering Problem (MTCP)}
  Find a minimum time covering strategy with $k$ robots starting at given positions.
 
 \item \textbf{Minimum Time Covering Problem with Rendezvous (MTCPR)}
  Find a covering strategy of minimum time with $k$ robots at given
starting positions, in which the robots rendezvous at most every $p$ steps, for a given positive integer $p$.
 \end{itemize}

\begin{figure}[h]
  \centering
  \includegraphics[scale=.65]{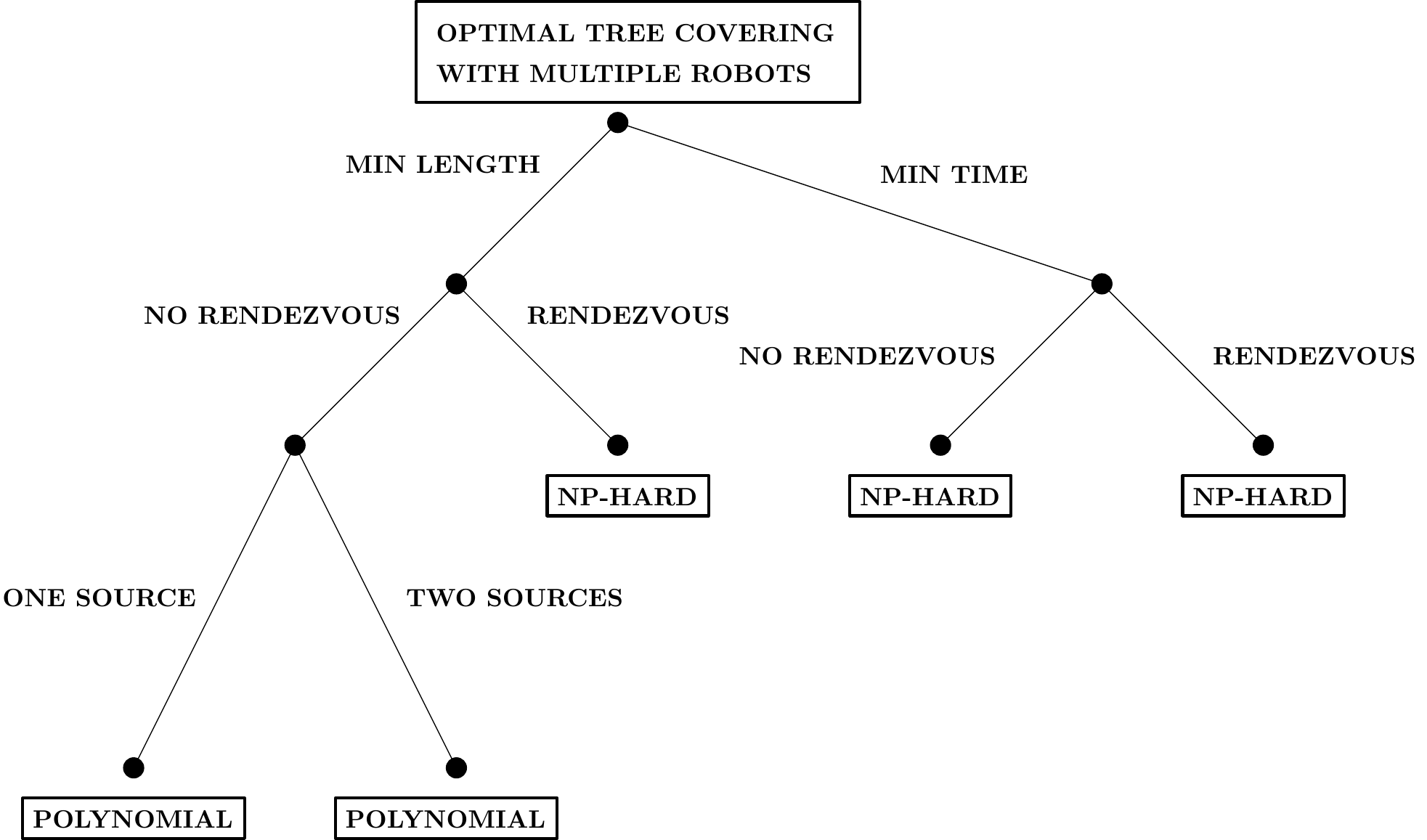}
  \caption{Complexity of the tree covering optimization problems.}
  \label{tree}
\end{figure}

It is well known that most of the optimization problems for multi-agent coverage (as well as the VRP, LRP and m-TSP) are NP-hard. In this paper, we show that the
computational complexity of the optimization problems presented above is different for the two objective functions, length and time.
We prove that MTCP is NP-hard.
Obviously, the NP-hardness of the non-rendezvous version implies NP-hardness for the rendezvous case. 
However, we show that the MLCP can be solved  in polynomial time using dynamic programming for one or two sources and then
 the complexity of the rendezvous version has to be independently established; for this case, we prove that the problem is NP-hard if $k$ is part of the input.  
Finally, the MLCP when the $k$ robots start at arbitrary $l$ positions remains open.
Our results are summarized in Figure~\ref{tree}.

The rest of the paper is organized as follows. In Section 2, we present the related literature. In Section 3, the MLCP is proved to be polynomial on $k$ (number of robots) and $n$ (number of vertices of the tree). Concretely,  we give an $O(k^2n)$ time complexity algorithm when all robots start at the same vertex and an $O(k^4n)$ time complexity algorithm when the robots start at two different locations.
In Section 4 we prove that the MLCPR is NP-hard.
In Section 5, we prove that the MTCP is NP-hard even for the non-rendezvous version. 
Finally, in Section 6 some conclusions and open problems are given.

\section{Related work}

Many methods developed in operations research have been used in robotics. An example related to the problems studied in this paper is the problem of finding an optimal routing scheme for sets of delivery vehicles departing from a central depot,  which is a problem of major interest in operations research; e.g., see \cite{clarke1964scheduling}. The delivery vehicles are required to cover a set of locations, such as stores, where some goods have to be delivered or picked up. A variant that has been studied in robotics is the so-called dynamic version of the vehicle routing problem, which deals with on-line arrival of customer demands \cite{smith2010dynamic,bullo2011dynamic,avellar2015multi}. Other well studied variations include imposing limits on the capacity of the vehicles, or time windows restricting the time during which goods have to be delivered; see \cite{solomon1987algorithms,ralphs2003parallel}.

Also, area coverage with $k$ robots is a problem that has long been investigated by the robotics research
community. 
The case $k=1$ is related to the covering salesman problem in which a robot must visit a
neighborhood of the city that minimizes the travel length \cite{arkin2000}. 
It is should be noted that a different problem is
the exploration of a completely unknown environment by a team of
mobile robots; that is, the on-line formulation -- see \cite{burgard2005} for a review. 

Regarding covering problems on a tree, a path planning algorithm for the off-line formulation of the multi-robot coverage problem is proposed in \cite{hazon2008redundancy}.  The authors split a spanning tree (previously computed to model a terrain) in such a way that each robot covers an equal portion of the tree. A closely related work on a tree, but for the on-line version, is \cite{czyzowicz2012tree}, where a swarm of mobile agents has to explore a tree, with the additional constraint that all the agents have to be close to each other; i.e. the distance between any pair of agents is at most a given positive number.

The problems studied in this paper are inspired in the use of unmanned aerial vehicles, or drones, which are being deployed on a known underlying graph to perform tasks such as the delivery of goods or surveillance. The main advantage of delivery
drones compared to regular delivery ground vehicles are that
they can operate over congested
roads without delay and without a costly human pilot. However, unmanned aerial vehicles have restrictions such as limited battery capacity or limited wireless capacity, which pose challenging problems to achieve persistence (sequences of visits to 
sites in a periodic fashion)  \cite{song2014persistent}, robustness (fault tolerance is particularly important for unmanned autonomous vehicles)  \cite{bereg2018} and, rendezvous (for example, if there is a separate team of charging robots that the drones can dock with in order to recharge)  \cite{mathew2015multirobot}. 

Note that similar problems arise in applications in which instead of drones, we use ground vehicles that move along the edges of networks representing a map of roads. 
Thus, these optimization problems with drones automatically fall into research areas such as VRP, LRP or m-TSP. Some challenging optimization problems in the interrelated areas VRP and aerial robotics have been addressed recently. For instance,  
\cite{stump2011multi} models the problem of finding sequences of visits to discrete sites in a periodic fashion such as a vehicle routing problem with time windows, and solves it using exact methods developed in the operations research community.
\cite{guerriero2014multi} presents a multi-criteria optimization model taking into account three objectives: minimizing the total distances traveled by the drones, maximizing customer satisfaction and minimizing the number of drones used. Customer satisfaction is modeled by using time window constraints.
In a recent paper \cite{agatz2018optimization}, the authors study the so-called TSP with drone, which considers the combination of a truck
and a drone in the commercial sector called ``last-mile delivery''. 
Such a combined system is found to give substantial savings compared to
the truck-only solution.

Finally, we mention a problem studied in operations research which examines the coverage of nodes of a graph using $k$ trees \cite{even2004min}. The goal is to find a set of $k$ trees such that the maximum weight of the used subtrees is minimized. The motivation arises from what the authors call the nurse station location scenario, in which nurses are assigned patients to care for. This problem is NP-hard and the authors give an algorithm to obtain a solution that is at most four times the size of the optimal solution. When the graph to be covered is a tree, a 2-approximation algorithm is given in \cite{nagamochi2004faster}; see also \cite{nagamochi2007approximating}. Note that the previous problem, although very close to ours, is quite different because we allow the robots to traverse an edge several times. To the best of the authors' knowledge, the optimization problems presented in this paper have not been studied before.


\section{Minimum Length Covering Problem (MLCP)}

Before presenting our results for the MLCP without rendezvous, we
state a useful technical result. It enables us to transform the problem 
of finding a minimum length strategy for  $T$ with $k$ robots to the problem
of finding a tuple of $k$ paths that minimizes a certain objective function. Given a graph $G$, we denote by $\mathcal{C}(G)$ the set
of its connected components and by $V(G)$ the set of its vertices.

\begin{lemma}\label{lem:struct}
Let $S:=\left (W_1,\dots,W_k \right)$ be a minimum length covering strategy for $T$ with $k$ robots.
The edge set of each $W_i$ can be decomposed into two sets: the edges of  a path $P(W_i)$ 
and the edges of a forest $F(W_i)$. Moreover, 
\begin{itemize}
 \item 

(a) each edge in $P(W_i)$ is traversed once by the $i$-th robot; 

\item 

(b) each edge in $F(W_i)$ is traversed twice by the $i$-th robot and is never traversed by any other robot;

\item 

(c) each $C \in \mathcal{C} (F(W_i))$  satisfies 
\[\left | V(C) \cap \left ( \bigcup_{i=1}^k V(P(W_i)) \right) \right |=1; \textrm{ and }\]

\item 

(d) for every pair $(j,l)$, $j \neq l$, we have that \[V(F(W_j))\cap V(F(W_l)) \subset \bigcup_{i=1}^k V(P(W_i)).\]
\end{itemize}
\end{lemma}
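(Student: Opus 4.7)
My plan is to define the decomposition via the parity of edge usage, and then verify each property using the minimality of $l(S)$ and the fact that $T$ is a tree.

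First, let $m_i(e)$ denote the number of times robot $i$ traverses edge $e$ in $W_i$. A standard parity argument for walks on trees shows $m_i(e)$ is odd iff $e$ lies on the unique path in $T$ from the start $s_i$ to the end $t_i$ of $W_i$. Since $S$ has minimum length, I can additionally assume $m_i(e)\in\{0,1,2\}$: any edge with $m_i(e)\geq 3$ admits a local rewriting (folding a back-and-forth excursion) that reduces $m_i(e)$ by $2$ while preserving coverage and endpoints. This motivates setting $P(W_i):=\{e:m_i(e)=1\}$, which by the parity argument is exactly the unique path from $s_i$ to $t_i$ in $T$ and hence is a path, and $F(W_i):=\{e:m_i(e)=2\}$, which as a subset of $E(T)$ is automatically a forest. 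The decomposition $E(W_i)=P(W_i)\sqcup F(W_i)$ is then immediate, and both (a) and the first half of (b) follow at once.

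For the second half of (b), I would argue by contradiction. Suppose some $e\in F(W_i)$ is also traversed by a different robot $j$. Removing $e$ from $T$ splits $T$ into two subtrees $T_A$ (containing $s_i$ and $t_i$) and $T_B$ (the excursion side for robot $i$). Since robot $j$ crosses $e$ it also enters $T_B$. I would then construct a strictly shorter strategy by eliminating robot $i$'s excursion into $T_B$ and extending robot $j$'s walk to absorb coverage of the vertices in $T_B$ that robot $i$ had been uniquely covering. The key accounting uses that robot $i$ traverses each edge of $F(W_i)\cap T_B$ twice, yielding savings on robot $i$'s side that strictly outweigh the additional cost on robot $j$'s side; this contradicts the minimality of $l(S)$.

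For (c), the existence of at least one vertex of $V(C)$ in $\bigcup_k V(P(W_k))$ is immediate, since robot $i$'s excursion through $C$ must enter from some vertex of $P(W_i)$. For uniqueness, I would take two distinct candidate vertices $r,r'\in V(C)\cap\bigcup_k V(P(W_k))$ and consider two cases. If both lie on $P(W_i)$, then the path in $C$ between them and the path in $P(W_i)$ between them have disjoint edge sets (since $F(W_i)\cap P(W_i)=\emptyset$) yet connect the same pair of vertices, producing a cycle in $T$, a contradiction. If one lies on $P(W_i)$ and the other on $P(W_j)$ for $j\neq i$, I would use (b) together with a minimality-based modification that reroutes part of $C$ to robot $j$ while shortening robot $i$'s excursion, eliminating the implied redundancy. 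Property (d) yields to a similar minimality argument: if a vertex $v\in V(F(W_j))\cap V(F(W_l))$ lay outside $\bigcup_k V(P(W_k))$, two distinct excursions would reach $v$ from different directions without any path passing through it, and one excursion can be redirected to absorb the other, strictly reducing the total length. The main technical obstacle is executing the modification arguments underlying the second half of (b), (c), and (d); these require careful tracking of which vertices are uniquely covered by the walk being altered, and verifying that the compensating change on another walk strictly decreases the total length rather than merely redistributing it among the robots.
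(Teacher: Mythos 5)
Your decomposition of each $W_i$ by traversal multiplicity, the parity observation identifying $P(W_i)$ with the start--end path, and the exchange arguments you outline for (b), (c), and (d) are exactly the paper's approach (the cycle argument for two attachment points on the same $P(W_i)$ included). The surgeries you flag as the main technical obstacle all resolve by one uniform computation, with no need to track ``uniquely covered'' vertices: the donor robot deletes the entire closed excursion between its two traversals of the connecting edge (saving the excursion's length plus $2$), and the recipient robot --- which by hypothesis already visits the attachment vertex --- inserts that excursion verbatim at that moment (costing only the excursion's length), so coverage is preserved and the total length strictly drops; this is precisely how the paper executes (b), (c), and (d).
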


\begin{proof} 
We first show that a robot can traverse an edge of $T$ at most twice. 
\sloppy By contradiction, suppose that $S$ is a minimum-length covering strategy for $T$ and there exists an edge $(v,w)$ that is traversed at least three times
by the $i$-th robot. Then $W_i$ can be written as $W_i=:(Q_{1},v,w,Q_{2},w,v,Q_{3},v,w,Q_{4})$ for 
some walks $Q_{1}$, $Q_{2}$, $Q_{3}$ and $Q_{4}$. See Figure~\ref{fig:lema1}.
Note that we can replace $W_i$ by the walk $(Q_{1},v,Q_{3},v,w,Q_{2},w,Q_{4})$ and
obtain a covering strategy for $T$ of shorter length, contradicting the optimality of $S$.

\begin{figure}[ht!]
	\centering
	\begin{subfigure}[t]{0.3\textwidth}
		\centering
		\includegraphics[width=.6\linewidth]{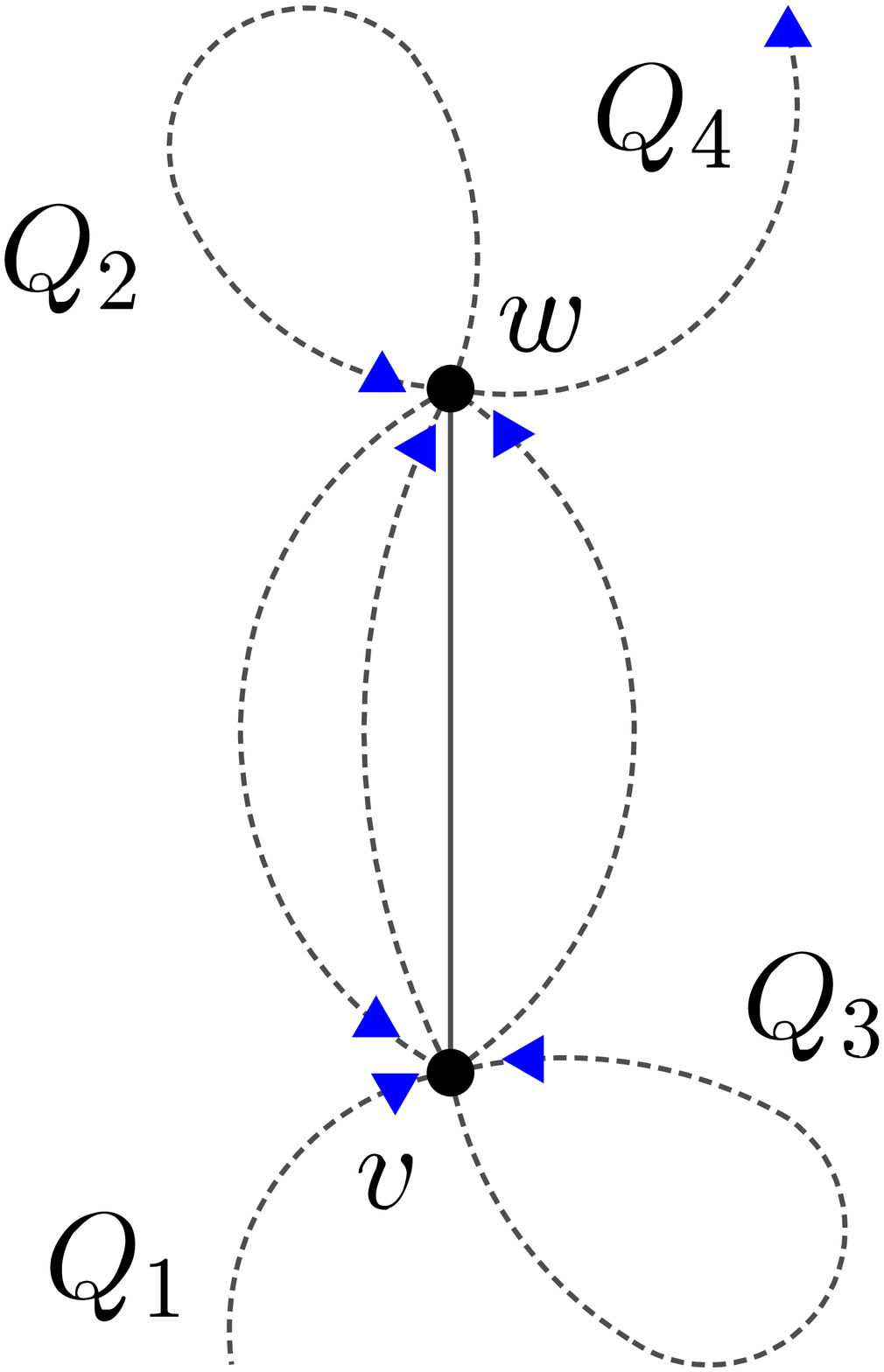}
		\captionsetup{width=.95\linewidth}
		\caption{}
		\label{fig:lema1}
	\end{subfigure}%
	\begin{subfigure}[t]{0.3\textwidth}
		\centering
		\includegraphics[width=.6\linewidth]{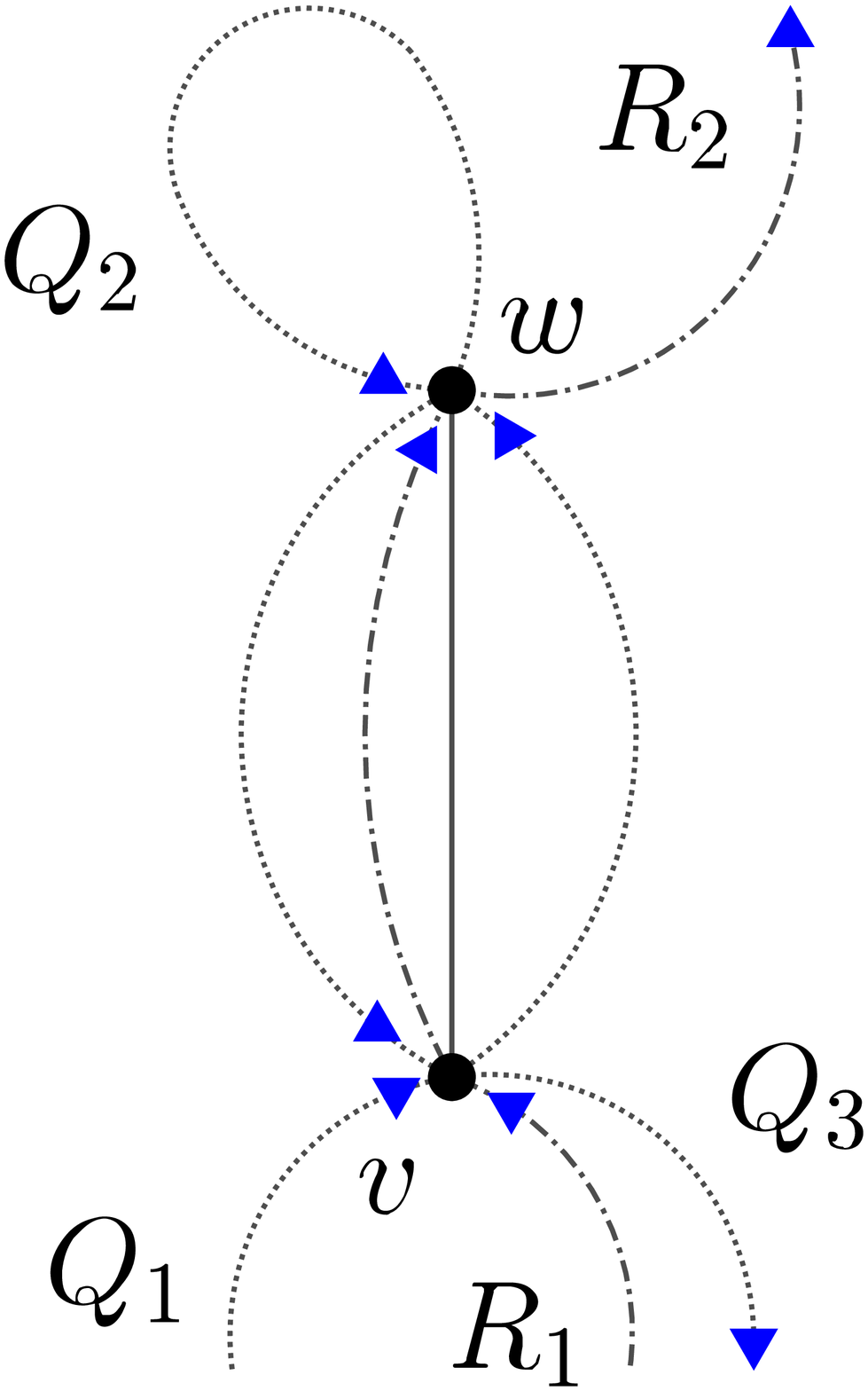}
		\captionsetup{width=.95\linewidth}
		\caption{}
		\label{fig:lema1b}
	\end{subfigure}%
	\begin{subfigure}[t]{0.3\textwidth}
		\centering
		\includegraphics[width=.6\linewidth]{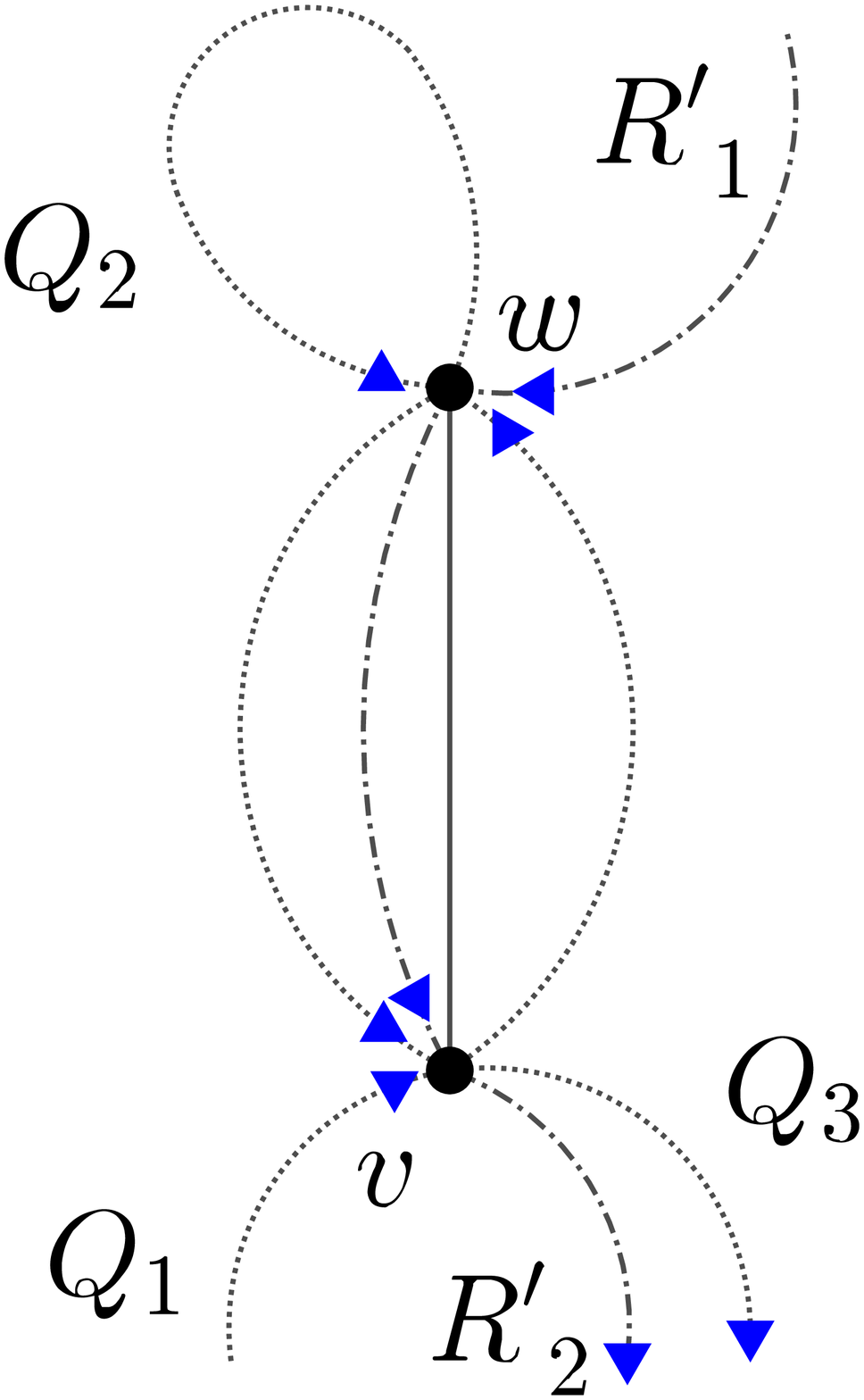}
		\captionsetup{width=.95\linewidth}
		\caption{}
		\label{fig:lema1c}
\end{subfigure}%
	\captionsetup{width=.85\linewidth}
	\caption{(a) A walk traversing the edge $(v,w)$ three times. (b),(c) show the cases when two robots traverse an edge three times.}
\end{figure}
 
 We define $P(W_i)$ and $F(W_i)$. For each $W_i$, let $P(W_i)$ be the
 subgraph of $T$ induced by the edges in $W_i$ traversed once by the $i$-th robot,
 and let $F(W_i)$ be the subgraph of $T$ induced by the edges in $W_i$ traversed twice by the $i$-th robot.
 Note that since the edges in $P(W_i)$ are traversed once by the $i$-th robot and $T$ is a 
 tree, then $P(W_i)$ is a path and $F(W_i)$ is a forest; this proves $(a)$.
 To prove $(b)$, it only remains 
 to show that no robot other than the $i$-th robot visits the edges in $F(W_i)$.
 
 Suppose that an edge $(v,w)$ in $F(W_i)$ is visited also by the $j$-th robot,
 for some $j \neq i$. Without loss of generality suppose that the $i$-th robot visits $(v,w)$ first from $v$ to $w$, 
 and afterwards from $w$ to $v$. Then $W_i$ can be written as $W_i=:(Q_{1},v,w,Q_{2},w,v,Q_{3})$ 
 for some walks $Q_1, Q_2$ and $Q_3$. We have the following cases:
 
 \begin{enumerate}
 	\item \emph{$W_j$ visits $v$ first.}
 	
 	Then $W_j$ can be written as $(R_1,v,w,R_2)$ for some walks $R_1$ and $R_2$. 
 	See Figure~\ref{fig:lema1b}. By replacing $W_i$ with $(Q_{1},v,Q_{3})$ and $W_j$ with $(R_{1},v,w,Q_2,w,R_{2})$,
 	we obtain a covering strategy for $T$ of shorter length, contradicting the optimality of $S$.
	
	\item \emph{$W_j$ visits $w$ first.}
	
	Then $W_j$ can be written as $(R'_1,w,v,R'_2)$ for some walks $R'_1$ and $R'_2$. 
	See Figure~\ref{fig:lema1c}. By replacing $W_i$ with $(Q_{1},v,Q_{3})$ and $W_j$ with $(R'_{1},w,Q_2,w,v,R'_{2})$,
	we obtain a covering strategy for $T$ of shorter length, contradicting the optimality of $S$.
 \end{enumerate}
 
 \begin{figure}[ht!]
		\centering
		\includegraphics[width=.7\linewidth]{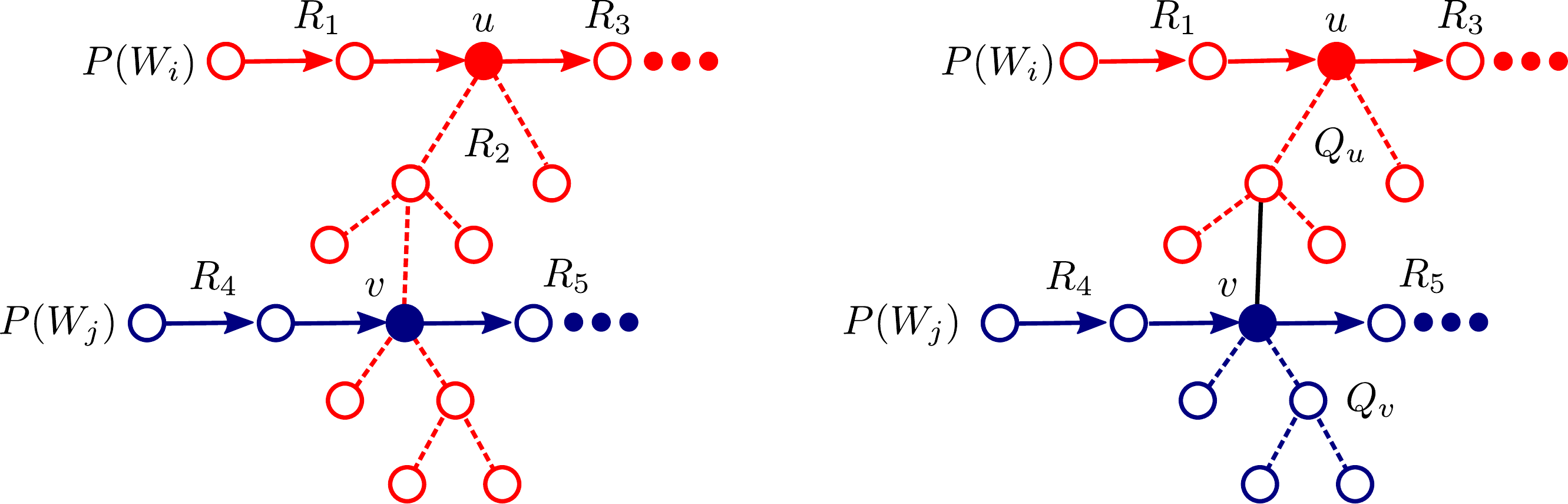}
		\captionsetup{width=.85\linewidth}
		\caption{Illustration of property (c).} 			
		\label{fig:Lemma-c}
	\end{figure}
 
We now prove $(c)$ by contradiction. Suppose that there exists a connected component
$C$ in some $F(W_i)$
such that \[X=:\left | V(C) \cap \left ( \bigcup_{j=1}^k V(P(W_i)) \right) \right |>1\] as illustrated in Figure~\ref{fig:Lemma-c}.
Since $T$ is a tree and $F(W_i)$ is edge disjoint from $P(W_i)$, we have that $\left|X \cap V(P(W_i))\right|=1$.
Let $\{u\}:=X \cap V(P(W_i))$, let $v$ be a vertex in $X$ distinct from $u$ and let
$W_j$ be the walk such that $v \in P(W_j)$. Root $C$ at $u$ and let
$T_v$ be the subtree of $C$ rooted at $v$. Let $T_u:=C\setminus T_v$. 
Let $Q_u$ and $Q_v$ be in-order traversals of $T_u$ and $T_v$, respectively.

Note that $W_i$ and $W_j$ can be written as
$W_i=:(R_1,u,R_2,u,R_3)$ and $W_j=:(R_4,v,R_5)$ for some walks $R_1,R_2,R_3,R_4$ and $R_5$.
By replacing $W_i$ with $(R_1,u,Q_u,u,R_3)$ and $W_j$ with $(R_4,v,Q_v,v,R_5)$ we obtain
a covering strategy for $T$ of length one less than the length of $S$. (In particular the edge
joining $v$ to its parent $C$ is no longer traversed.) This is a contradiction
to the optimality of $S$.

Finally, we prove $(d)$ by contradiction. Suppose that a pair $F(W_j)$ and $F(W_l)$ intersect
at a $u$ vertex not in \[\bigcup_{i=1}^k V(P(W_i)).\]
Let $C_1$ and $C_2$ be the connected components of $F(W_j)$ and $F(W_l)$ containing
$u$ respectively. See Figure~\ref{fig:Lemma-d}. Let $v$ be the vertex of $C_1$ in $P(W_j)$.
Root $C_1$ at $v$ and let $w$ be the parent of $u$ in $C_1$. Then $W_j$ and $W_l$ can be written as
$W_j=:(Q_1,w,u,Q_2,u,w,Q_3)$ and $W_l=:(R_1,u,R_2)$ for some walks $Q_1$, $Q_2$, $Q_3$, $R_1$
and $R_2$. By replacing $W_j$ with $(Q_1,w,Q_3)$ and $W_l$ with $(R_1,u,Q_2,u,R_2)$, we obtain
a covering strategy for $T$ of length one less than the length of $S$, a contradiction.

\begin{figure}[ht!]
		\centering
		\includegraphics[width=.9\linewidth]{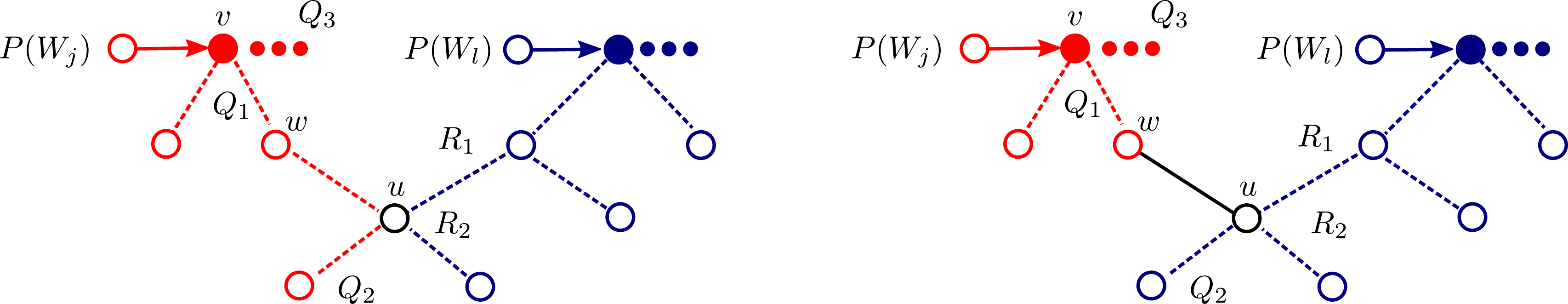}
		\captionsetup{width=.85\linewidth}
		\caption{Illustration of property (d).} 
		\label{fig:Lemma-d}
	\end{figure}

\end{proof}

Suppose now that $S:=(W_1,\dots,W_k)$ is a minimum length covering strategy for $T$.
Let $e$ be an edge of a walk $W_i$. By Lemma~\ref{lem:struct} we have the following.
If $e$ is in exactly $q$ of the paths $P(W_i$), then it is visited exactly $q$ times.
Otherwise, $e$ is visited exactly twice. 
Therefore, the cost of $S$ is given by:
\begin{eqnarray}\label{eq:cost}
 l(S) & = &\sum_{i=1}^k l(P(W_i))+2\sum_{i=1}^k |E(F(W_i)|\nonumber \\
  & = &\sum_{i=1}^k l(P(W_i))+2\sum_{i=1}^k \sum_{C \in \mathcal{C}(F(W_i))}   \left ( |C|-1 \right )\nonumber \\
 & = & \sum_{i=1}^k l(P(W_i))+2\left |V(T)\setminus \bigcup_{i=1}^k V(P(W_i)) \right|.
\end{eqnarray}
The last equality follows from $(c)$ and $(d)$ of Lemma~\ref{lem:struct}.
A remarkable property of Equation~\eqref{eq:cost} is that the cost of $S$ only depends on the paths $P(W_i)$. 
In the following lemma we show that if we are given
a tuple of $k$-paths, then we can efficiently build a covering strategy
with the same structure as in Lemma~\ref{lem:struct}.

\begin{lemma}\label{lem:struct_conv}
Let $\mathcal{P}:=(P_1,\dots,P_k)$ be a tuple of $k$ paths in $T$.
Then in $O(n)$ time we can compute a covering strategy $S:=(W_1,\dots,W_k)$
for $T$ with $k$ robots that satisfies the following. 
The edge set of each $W_i$ can be decomposed into two sets: the edges of a path $P(W_i)=P_i$ 
and the edges of a forest $F(W_i)$. Moreover, 
\begin{itemize}
 \item 

(a) each edge in $P(W_i)$ is traversed once by the $i$-th robot; 

\item 

(b) each edge in $F(W_i)$ is traversed twice by the $i$-th robot and is never traversed by any other robot;

\item 

(c) each $C \in \mathcal{C} (F(W_i))$  satisfies that 
\[\left | V(C) \cap \left ( \bigcup_{i=1}^k V(P(W_i)) \right) \right |=1; \textrm{ and }\]

\item 

(d) for every pair $j \neq l$ we have that \[V(F(W_j))\cap V(F(W_l)) \subset \bigcup_{i=1}^k V(P(W_i)).\]
\end{itemize}
\end{lemma}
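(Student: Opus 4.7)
The approach is to reverse the structural decomposition of Lemma~\ref{lem:struct}: given the $k$ paths $P_i$, I will attach every vertex of $T$ that is not on any $P_i$ to exactly one ``detour subtree'' hanging off a single path vertex, and then build the walks by traversing the $P_i$ with Euler-tour detours into the subtrees assigned to each robot.

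The central construction is as follows. Set $U := \bigcup_{i=1}^k V(P_i)$, which is nonempty since each $P_i$ contains at least one vertex. Root $T$ at an arbitrary $r \in U$, and for each $v \in V(T)$ let $a(v)$ denote the closest ancestor of $v$ (including $v$ itself) that lies in $U$; this is well defined because $r \in U$. For each $u \in U$ set
\[T_u := \{\, v \in V(T) : a(v) = u \,\}.\]
A direct induction on tree-depth shows that $T_u$ induces a connected subtree of $T$ rooted at $u$ (the parent of any $v \in T_u \setminus \{u\}$ also satisfies $a(p(v)) = u$), that $V(T_u) \cap U = \{u\}$ (any other $w \in U$ satisfies $a(w) = w \neq u$), and that the $\{T_u\}_{u \in U}$ partition $V(T)$. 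For each $u \in U$ I pick any robot $r(u)$ with $u \in V(P_{r(u)})$, declare $F(W_i) := \bigcup_{u\,:\,r(u)=i} T_u$ as a subgraph of $T$, and build $W_i$ by walking along $P_i$ and, upon arrival at each vertex $u \in V(P_i)$ with $r(u) = i$, inserting an Euler tour of $T_u$ that starts and ends at $u$ before continuing along $P_i$.

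The verification of (a)--(d) and the $O(n)$ bound is then routine. Property (a) is immediate: the edges of $P(W_i) = P_i$ are traversed exactly once. Property (b) follows because each $T_u$ is visited only by its owner $r(u)$, and each edge of $T_u$ is traversed twice during the Euler tour. Property (c) holds since the connected components of $F(W_i)$ are precisely the assigned $T_u$, each of which meets $U$ in the single vertex $u$. Property (d) is in fact stronger than required: since the $T_u$ partition $V(T)$, we have $V(F(W_j)) \cap V(F(W_l)) = \emptyset$ for $j \neq l$. Finally, computing $U$, rooting $T$, evaluating $a(\cdot)$ by a single DFS, assembling the $T_u$, and emitting the Euler tours each take $O(n)$ time. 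The step I expect to require the most care is not the algorithmic piece but the bookkeeping of interleaving the Euler tours with the traversal of $P_i$ so that the edge multiset of $W_i$ decomposes exactly as $P_i$ (once) plus the assigned $T_u$-edges (twice); this has to be written out carefully to match the decomposition asserted in (a) and (b), but is otherwise mechanical.
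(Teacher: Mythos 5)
Your proposal is correct and follows essentially the same construction as the paper: hang each off-path vertex in a subtree rooted at a path vertex, assign each such subtree to a single robot whose path contains its root, and build each walk by traversing $P_i$ while inserting a double-traversal (Euler tour) of each assigned subtree at its root. The only difference is cosmetic --- you define the subtrees globally via the nearest path-ancestor map $a(\cdot)$ and then pick an owner $r(u)$ arbitrarily, whereas the paper assigns maximal subtrees greedily for $i=1,\dots,k$; both yield the same decomposition and the same verification of (a)--(d).
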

\begin{proof}

Starting from $i=1$ to $k$, we construct $W_i$ as follows.
For each vertex $v$ of $P_i$, let $T_v$ be the tree rooted
at $v$ with the maximum number of edges such that 
\begin{itemize}
 \item $V(T_v) \cap V(P_i)=\{v\}$ and
 \item  $(V(T_v) \setminus \{v\}) \cap W_j= \emptyset$ for all $j <i$.
\end{itemize}
Note that $T_v$ can be computed in $O(|T_v|)$ by doing a depth-first
search starting at $v$. Let $Q_v$ be an in-order traversal
of $T_v$ starting at $v$.

Assume $P_i=:(v_1,\dots, v_m)$ and let 
\[W_i:=(v_1,Q_{v_1},\dots,v_m,Q_{v_m}).\]
Thus, $W_i$ is the walk in which
the $i$-th robot follows $P_i$, stopping at every vertex $v$ of $P_i$
to do an in-order traversal of $T_v$. 
Let $P(W_i)=P_i$ and  $F(W_i)=\bigcup_{v \in V(P_i)} T_v$.
Let $S=(W_1,\dots,W_k)$.
Since every vertex of $T$ is in some $T_v$ for some
$P_i$, $S$ is a covering strategy for $T$ with $k$ robots. By construction, $S$
satisfies $(a)$,$(b)$, $(c)$ and $(d)$. Finally, $S$ is computed in time
\[\sum_{i=1}^k \sum_{v \in P_i} O(|T_v|)=O(n).\]
\end{proof}

Lemmas~\ref{lem:struct} and~\ref{lem:struct_conv} together imply that to find a minimum covering strategy 
for $T$ with $k$ robots, it is sufficient to find a tuple of 
$k$ paths $(P_1,\dots,P_k)$ that minimizes the following expression.
\begin{equation}\label{eq:cost_paths}
 \sum_{i=1}^k l(P_i)+2\left |V(T)\setminus \bigcup_{i=1}^k V(P_i) \right|.
\end{equation}
In what follows we use this fact implicitly. We represent a covering strategy by its tuple of paths.
Each path is represented by a linked list. This
representation enable us to update our covering strategies efficiently, thus we use dynamic programming.  For brevity,
we only show how to compute the costs. Computing the corresponding strategies
can be done as in the proof of Lemma~\ref{lem:struct_conv}.

\subsection{All Robots Starting at the Same Vertex}\label{sec:one_source}
Assume that all robots start at a vertex $u$
 of $T$ and let $1 \le j \le k$ be an integer. 
Let \[\textproc{One-source}[T,u,j]\] be the cost of a minimum-length covering
strategy for $T$ with $j$ robots, all starting at $u$. 

\begin{theorem} \label{thm:k-robots-same-node}
The set  \[\left \{ \textproc{One-source}[T,u,j]: 1 \le j \le k \right \}\]
can be computed in $O(k^2 n )$ time. 
\end{theorem}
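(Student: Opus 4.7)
The plan is to root $T$ at $u$ and compute a tree dynamic program. By Lemmas~\ref{lem:struct} and~\ref{lem:struct_conv}, it suffices to find, for each $j \in \{1, \dots, k\}$, a tuple of paths $(P_1, \dots, P_j)$ that minimizes the expression in~\eqref{eq:cost_paths}. Since every walk starts at $u$, and within a tree the once-traversed edges of such a walk form the odd-degree part of an Eulerian trail rooted at $u$, each spine path $P_i$ must have $u$ as an endpoint. Once $T$ is rooted at $u$, each $P_i$ thereby becomes a monotone root-to-descendant path.

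For each vertex $v$ of the rooted tree and each $0 \le j \le k$, I would define $f[v,j]$ as the minimum ``local cost'' within the subtree $T_v$, subject to the constraint that exactly $j$ of the chosen paths pass through $v$ into $T_v$. Here the local cost is the number of edges of $T_v$ used by those $j$ path fragments, plus twice the number of vertices of $T_v$ not covered by any path; by~\eqref{eq:cost_paths} the global objective decomposes into a sum of such local contributions. The answer for $j$ robots will then be $f[u,j]$, since any path that ``stops at $u$'' contributes nothing.

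The base cases will be $f[v,0] = 2|V(T_v)|$ (no path covers any vertex of $T_v$) and $f[v,j]=0$ for a leaf $v$ with $j\ge 1$. For an internal vertex $v$ with children $c_1,\dots,c_d$ and $j\ge 1$, the $j$ incoming paths split: $j_i \ge 0$ of them extend into child $c_i$ (each using the edge $(v,c_i)$) and the remaining $j - \sum_i j_i$ terminate at $v$. This yields
\[
f[v,j] \;=\; \min_{\substack{j_1,\dots,j_d \ge 0 \\ j_1 + \cdots + j_d \le j}} \sum_{i=1}^{d} \bigl(j_i + f[c_i, j_i]\bigr),
\]
which I would evaluate incrementally via auxiliary values $g_i[v,j]$, obtained by merging the children one at a time in the standard knapsack-over-children fashion: $g_0[v,j]=0$ and $g_i[v,j] = \min_{0 \le j' \le j}\bigl(g_{i-1}[v,j-j'] + j' + f[c_i,j']\bigr)$, with $f[v,j] = g_{d_v}[v,j]$.

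Each of the $O(k)$ entries $g_i[v,j]$ takes $O(k)$ time to compute, so the DP at $v$ costs $O(d_v k^2)$, where $d_v$ is its number of children. Summing over the tree yields total time $O(k^2 \sum_v d_v) = O(k^2 n)$, and all the values $\{f[u,j] : 1 \le j \le k\}$ can be read directly from the resulting table at $u$. The hard part will be correctness rather than efficiency: justifying that in any minimum-length strategy the spine paths $P(W_i)$ start at $u$ (so the DP need only consider root-to-descendant paths), and that the flow-style constraint $\sum_i j_i \le j$ captures exactly the admissible distributions of paths among subtrees.
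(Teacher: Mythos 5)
Your proposal is correct and follows essentially the same route as the paper: root $T$ at $u$, observe (via the parity argument you sketch, which the paper leaves implicit) that every spine path $P(W_i)$ is a root-to-descendant path, and run a knapsack-style DP over vertices and robot counts, merging children one at a time --- your $g_i[v,j]$ is exactly the paper's table $C[v,i,j]$, up to the equivalent bookkeeping of charging twice the forest edges versus twice the uncovered vertices. The recurrences, base cases, and $O(k^2 n)$ accounting all match.
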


\begin{proof}
	Assume that $T$ is rooted at $u$ and the children of every vertex of $T$ are listed in some arbitrary order.
For every vertex $v \in T$, let $T_v$ be the subtree of $T$ rooted at $v$ and let $T_v[i]$ be the subgraph of $T$ 
consisting of the union of the subtrees rooted at the 
first $i$ children of $v$ together with the edges joining these children to $v$. See Figure~\ref{Treev}.
	
	\begin{figure}[ht!]
		\centering
		\includegraphics[width=.38\linewidth]{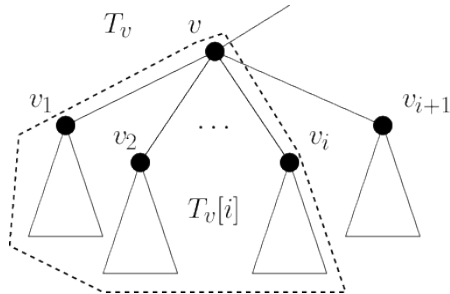}
		\captionsetup{width=.85\linewidth}
		\caption{A subtree of $T$ rooted at a vertex $v$. The subtree $T_v[i]$ is enclosed by dashed lines.} 
		\label{Treev}
	\end{figure}
	
	We use  an auxiliary table $C[v,i,j]$, where $v$ runs over all vertices
	of $T$, $1 \le i \le \degree(v)$ and $0 \le j \le k$. 
	For $j>1$, the table $C[v,i,j]$ stores the cost of a minimum exploring
	strategy for $T_v[i]$ with $j$ robots all starting at $v$. 
	For every vertex $v$ in $T$, we set $C[v,i,0]$ to be equal to twice the number of edges in $T_v[i]$. 
	Note that this is the cost of exploring $T_v[i]$  with a robot that starts and ends its walk
	at $v$. Also note that $C[u,\degree(u),j]$ is the desired $\textproc{One-source}[T,u,j]$.
	
	Let $v$ be a vertex of $T$ and let $v_1,\dots,v_m$ be its children.
	Consider an optimal covering strategy for $T_v[i]$ with $j$ robots
	all starting at $v$. Assume that
	it is represented by a tuple of paths that minimize~(\ref{eq:cost_paths}).
	By Lemma \ref{lem:struct}, if none of these paths end at a vertex of $T_{v_i}$, then $T_{v_i}$ is covered by
	one robot; this robot visits the edge $(v,v_i)$ twice. 
	If $l > 0$ of these paths end at a vertex of $T_{v_i}$,
	then each of the corresponding robots visits the edge $(v,v_i)$ once; the remaining paths
	end at a vertex of $T_v[i-1]$. Therefore,  $C[v,i,j]$ is equal to the minimum
	of 
	     \[C[v,i-1,j]+C[v_i,\degree(v_i),0]+2 \]
	and 
	     \[C[v,i-1,j-l]+C[v_i,\degree(v_i),l]+l, \] 
	for all $1 \le l \le j$.
	
 We compute $C[v,i,j]$ from bottom to top.
Having computed $C[v,i,j]$ for all vertices at height $h$ of the rooted tree, we compute these values for the vertices at height $h-1$. Since there are in total $O(k n)$ entries in  $C$ and computing each entry takes $O(k)$ time, the algorithm spends $O(k^2 n)$ time.
	
\end{proof}

\subsection{Robots starting at two vertices}

In this section, we show that the case in which the $k$ robots start at two different locations $u$ and $v$ can also be solved by dynamic programming.

Let $s,t \ge 0$ be integers. For $s,t \ge 1$ let 
\[\textproc{Two-sources}[T,u,v,s,t]\]
be the cost of an optimal covering strategy for $T$ in which $s$ robots start at $u$ and
$t$ robots start at $v$. 
Let \[\textproc{Two-sources}[T,u,v,0,t]\] be the cost of an optimal covering strategy for $T$ in which $1$ robot
starts and ends at $u$, and  $t$ robots start at $v$. 
Let \[\textproc{Two-sources}[T,u,v,s,0]\] be the cost of an optimal covering strategy for $T$ in which $1$ robot 
starts and ends at $v$, and  $s$ robots start at $u$.

To solve the problem of two starting locations,
we first need some definitions and a pair of technical results. Let $u$ and $v$ be two vertices of $T$. We rename the vertices of the path $\gamma$ from $u$ to $v$ in $T$ as $x_1, x_2,\dots, x_m$; i.e., $\gamma := (u=x_1, x_2,\dots, x_m=v)$. 
 Let $F$ be the forest obtained by removing all the edges in $\gamma$ from $T$; 
 let $T_1,\dots, T_m$ be the connected components of $F$, 
 where $T_i$ is the component containing $x_i$. Assume that $T_i$ is rooted at $x_i$.
 Let $T_{u,v}[i]$ be the tree consisting of the union of the subpath $\gamma_i:=(x_1, \ldots, x_i)$ and the trees $T_1, \dots, T_i$ (see Figure~\ref{Path_P}).

 \begin{figure}[h]
		\centering
		\includegraphics[width=.46\linewidth]{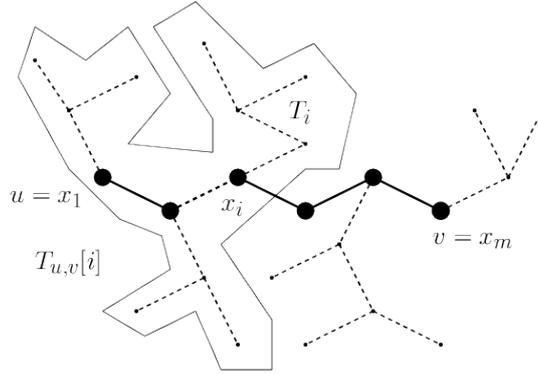}
		\captionsetup{width=.85\linewidth}
		\caption{The tree $T_{u,v}[i]$ is in the polygonal region. 
		The path joining $u=x_1$ to $v=x_m$ is marked with solid black line segments. $T_1,\dots,T_m$ 
		are marked with dashed black line segments. }
		\label{Path_P}
	\end{figure}

For integers $1 \le k' \le k$, $0 \le j \le k'$ and $1 \le i \le m$,  let \[\textproc{Destination-path}[T,u,v,x_i,k',j]\]
be cost of an optimal covering strategy for $T_{u,v}[i]$ with $k'$ robots
 all starting at $u$ in which at least $j$ robots end their walks at $x_i$, and
 let \[\textproc{Destination-path}[T,u,v,x_i,0,0]\]
be the cost of an optimal covering strategy for $T_{u,v}[i]$ with $1$ robot that
starts and ends at $u$.

\begin{theorem} \label{thm:at-least}
The set \[\left \{ \textproc{Destination-path}[T,u,v,x_i,k',j]\colon 0 \le j\le  k' \le k, \textrm{ and }  1 \le i \le m \right \}\] 
can be computed in  $O(k^3n)$ time.
 \end{theorem}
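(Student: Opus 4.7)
The plan is to compute $E[i, k', j] := \textproc{Destination-path}[T, u, v, x_i, k', j]$ by dynamic programming on $i = 1, \dots, m$, with the aid of an auxiliary table that extends the one from Theorem~\ref{thm:k-robots-same-node}. For each subtree $T_i$ (rooted at $x_i$), I will first compute $\Psi[T_i, r, s]$, the minimum cost of covering $T_i$ with $r$ robots starting at $x_i$ such that at least $s$ of them end their walk at $x_i$. This is obtained by augmenting the table $C[v, i, j]$ of Theorem~\ref{thm:k-robots-same-node} with one extra parameter that tracks how many of the $r$ robots at the root $x_i$ have not yet been committed to extend into a processed child subtree; at the end of the children iteration, a robot has a trivial path precisely when it has not been committed anywhere. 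The augmented recurrence has the same shape as in Theorem~\ref{thm:k-robots-same-node} and fills $O(|T_i|\, k^2)$ entries per subtree in $O(k)$ time each, for $O(n k^3)$ total since $\sum_i |T_i| = O(n)$.

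For the main recurrence, consider any strategy counted by $E[i, k', j]$ with $j \ge 1$, and let $l$ be the number of its paths that reach $x_i$. By Lemma~\ref{lem:struct}, the strategy decomposes cleanly at the edge $(x_{i-1}, x_i)$: restricted to $T_{u,v}[i-1]$ the $l$ extending paths appear as paths ending at $x_{i-1}$, so the induced sub-strategy has cost at least $E[i-1, k', l]$; the edge $(x_{i-1}, x_i)$ is traversed $l$ times; and within $T_i$ there are $l$ paths starting at $x_i$ of which at least $j$ are trivial, contributing at least $\Psi[T_i, l, j]$. In the converse direction, Lemma~\ref{lem:struct_conv} lets me stitch any two sub-strategies attaining these bounds into a valid global strategy, giving
\[
E[i, k', j] \;=\; \min_{j \le l \le k'} \bigl( E[i-1, k', l] + l + \Psi[T_i, l, j] \bigr)
\]
for $j \ge 1$. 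For $j = 0$ I additionally allow $l = 0$, in which case $T_i$ and the edge $(x_{i-1}, x_i)$ form a single detour contributing $2(|E(T_i)| + 1)$, added to $E[i-1, k', 0]$. The base case $E[1, k', j] = \Psi[T_1, k', j]$ is covered by the auxiliary table, and the special entries $E[i, 0, 0] = 2|E(T_{u,v}[i])|$ are computed incrementally along $\gamma$.

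The table $E$ has $O(m k^2) = O(n k^2)$ entries, each filled in $O(k)$ time via the recurrence, giving $O(n k^3)$ total. Together with the auxiliary tables $\Psi$, the overall running time is $O(n k^3)$, as claimed. The main subtlety I expect is verifying the achievability direction of the recurrence: if the optimal sub-strategy on $T_{u,v}[i-1]$ with at least $l$ paths ending at $x_{i-1}$ happens to have more than $l$ such paths, then exactly $l$ of them can be consistently extended to $x_i$ without interfering with the chosen configuration on $T_i$. This should follow from property~(d) of Lemma~\ref{lem:struct}, which guarantees that forest components on $T_{u,v}[i-1]$ and on $T_i$ overlap only at vertices of the path union, so the two optimally chosen sub-strategies can indeed be glued together into a single covering strategy of the expected total cost.
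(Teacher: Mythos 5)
Your proof is correct and follows essentially the same dynamic program as the paper: your recurrence $E[i,k',j]=\min_{j\le l\le k'}\left(E[i-1,k',l]+l+\Psi[T_i,l,j]\right)$ (with the extra $l=0$ detour term when $j=0$) coincides with the paper's, because your auxiliary quantity $\Psi[T_i,l,j]$ is exactly $\textproc{One-source}[T_i,x_i,l-j]$ --- a robot whose walk ends at $x_i$ has a trivial path and contributes nothing to the cost in~\eqref{eq:cost_paths}, so the paper simply reuses the one-source table with a shifted robot count instead of building your augmented table. This spares the extra bookkeeping but does not change the $O(k^3n)$ bound.
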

\begin{proof}

First we use Theorem~\ref{thm:k-robots-same-node} to compute  the set
\[\left \{ \textproc{One-source}[T_i,x_i,l]\colon 1 \le i \le m \textrm{ and } 1 \le l \le k \right \}.\]
This can be done in  $\sum_{i=1}^m O(k^2 |T_i|)=O(k^2 n)$  time. 
We now show how to compute $\textproc{Destination-path}[T,u,v,x_i,k',j]$
depending on the values of $i$ and $j$. Consider an optimal covering strategy
for  $T_{u,v}[i]$ with $k'$ robots
 all starting at $x_1=u$ in which at least $j$ robots end their walks at $x_i$ and assume
 that it is represented by a tuple of $k'$ paths that minimize \eqref{eq:cost_paths}. We have the following cases:
\begin{itemize}
 \item $i=1$.
 
	In this case, at least $l \ge j$ paths consist only of the vertex $x_1$ and  the remaining 
	paths end at a vertex of $T_1$. Therefore, 
	$\textproc{Destination-path}[T,u,v,x_1,k',j]$ is equal to the minimum of  
	\[\textproc{One-source}[T_1,x_1,k'-l],\]
	overall $j \le l \le k'.$

 \item $i > 1$ and $j=0$.
 
        In this case, no path is required to end at $x_i$. If no path ends at a vertex of $T_i$, 
        then $T_i$ is explored by a single robot that visits the edge $(x_{i-1},x_i)$ twice. 
        If $l > 0$ paths end at a  vertex of  $T_i$, then each of the corresponding
        robots visits the edge $(x_{i-1},x_i)$ once. Therefore, 
        $\textproc{Destination-path}[T,u,v,x_i,k',0]$ is equal to the minimum of 
       
         \[\textproc{Destination-path}[T,u,v,x_{i-1},k',0]+\textproc{One-source}[T_i,x_i,0]+2\]
         and 
        \[\textproc{Destination-path}[T,u,v,x_{i-1},k',l]+\textproc{One-source}[T_i,x_i,l]+l,\]
       over all $1 \le l \le k'$.

\item $i > 1$ and $j > 0$.

        Suppose that  $l \ge j$ of the paths end at a vertex of $T_i$. Thus, 
        each of their corresponding robots visits the edge 
        $(x_{i-1},x_i)$ once, $j$ of them  end their walks at $x_i$, 
        and $l-j$ of them may end their walks at a vertex of $T_i$ different from 
        $x_i$. Therefore, 
        $\textproc{Destination-path}[T,u,v,x_i,k',j]$ is equal to the minimum of 
       \[\textproc{Destination-path}[T,u,v,x_{i-1},k',l]+\textproc{One-source}[T_i,x_i,l-j]+l,\]
       over all $j \le l \le k'$.
\end{itemize}
Using dynamic programming, each  entry  can be computed  in $O(k)$ time.
Thus, computing the whole table can be done in $O(k^3n)$ time.
\end{proof}

\begin{lemma}\label{lem:no_crossing}
Let $(x,y)$ be an edge of an optimal covering strategy for $T$ that is visited by at least two different robots. Then all robots traverse $(x,y)$ in the same direction;
either from $x$ to $y$ or from $y$ to $x$.
\end{lemma}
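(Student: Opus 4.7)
The plan is to argue by contradiction via a local swap on the two walks that use $(x,y)$ in opposite directions. First I would invoke Lemma~\ref{lem:struct}(b): since any edge of $F(W_i)$ is visited only by the $i$-th robot, an edge that is used by two different robots must lie in $P(W_i)$ for every robot $i$ that visits it. In particular each such robot traverses $(x,y)$ exactly once, with a well-defined direction.

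Suppose for contradiction that robot $i$ traverses $(x,y)$ from $x$ to $y$, while robot $j$ traverses it from $y$ to $x$. Write
\[W_i=(Q_1,x,y,Q_2) \quad \text{and} \quad W_j=(R_1,y,x,R_2),\]
isolating these oriented crossings, and define the spliced walks
\[W_i':=(Q_1,x,R_2) \quad \text{and} \quad W_j':=(R_1,y,Q_2),\]
obtained by swapping tails at the common vertices of the reversed crossings. I would then replace $(W_i,W_j)$ by $(W_i',W_j')$ in $S$ to obtain a new strategy $S'$.

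Two verifications are needed. First, $W_i'$ and $W_j'$ are legal walks: the last vertex of $Q_1$ is equal or adjacent to $x$ because $x$ follows it in $W_i$, and the first vertex of $R_2$ is equal or adjacent to $x$ because it follows $x$ in $W_j$; the situation for $W_j'$ is symmetric. Second, $S'$ is still a covering strategy, because $V(W_i')\cup V(W_j')=V(W_i)\cup V(W_j)$ and all other walks are unchanged. A direct count of edge traversals then shows that exactly the two opposite crossings of $(x,y)$ have been eliminated from $W_i$ and $W_j$, so $l(W_i')+l(W_j')=l(W_i)+l(W_j)-2$, which strictly decreases $l(S)$ and contradicts optimality. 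I do not anticipate any serious obstacle; the only delicate point is confirming that the splice is syntactically a valid walk, and this is immediate from how $W_i$ and $W_j$ are cut around the shared vertices $x$ and $y$.
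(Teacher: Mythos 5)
Your proposal is correct and follows essentially the same argument as the paper: isolate one $x\to y$ crossing and one $y\to x$ crossing, swap the tails at $x$ and $y$ to form $(Q_1,x,R_2)$ and $(R_1,y,Q_2)$, and observe that coverage is preserved while the length drops by $2$, contradicting optimality. The preliminary appeal to Lemma~\ref{lem:struct}(b) is a harmless extra remark that the paper omits; the core exchange step is identical.
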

\begin{proof}
By contradiction, suppose that a robot traverses $(x,y)$ from $x$ to $y$ and
that another robot traverses $(x,y)$ from $y$ to $x$.
Let $(W_1,x,y,W_2)$
be the walk traversed by the first robot, and $(W_1',y,x,W_2')$ be the walk traversed by the second robot. 
If we replace the first walk by $(W_1,x,W_2')$ and 
the second walk by $(W_1',y,W_2)$, 
we obtain a covering strategy of smaller length and the result follows.
\end{proof}

The next theorem shows how to find an optimal covering strategy for $T$ for the case when there
are two starting locations, $u$ and $v$. 

\begin{theorem} \label{thm:2-vertices}
The set \[\left \{\textproc{Two-sources}[T,u,v,s,t]: 0 \le s,t \le k\right \}\]
can be computed in $O(k^{4} n)$  time.
\end{theorem}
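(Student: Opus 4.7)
The plan is to reduce the two-source problem to the combination of two $\textproc{Destination-path}$ computations, one from each source, by enumerating a ``meeting vertex'' $x_r$ on the path $\gamma=(u=x_1,\ldots,x_m=v)$. First I invoke Theorem~\ref{thm:at-least} a second time with the roles of $u$ and $v$ swapped, computing $\textproc{Destination-path}[T,v,u,x_r,t',j]$ for all admissible $r,t',j$ in $O(k^3 n)$ time. I also introduce a variant $\mathcal{A}[r,s',j_u]$ (and symmetrically $\mathcal{B}[r,t',j_v]$) equal to the minimum cost of covering only $T^A_r:=T_{u,v}[r-1]\cup\{x_r\}$ (that is, $T_{u,v}[r]$ with the hanging subtree $T_r$ replaced by the single vertex $x_r$) using $s'$ robots from $u$ with at least $j_u$ of them ending at $x_r$. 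Since this is exactly the DP of Theorem~\ref{thm:at-least} with the $\textproc{One-source}[T_r,x_r,\cdot]$ contribution at step $r$ suppressed, computing $\mathcal{A}$ and $\mathcal{B}$ takes $O(k^3 n)$ time each. Finally I precompute $\textproc{One-source}[T_r,x_r,j]$ for every $r$ and every $j\le 2k$ in $O(k^2 n)$ time via Theorem~\ref{thm:k-robots-same-node}.

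Using Lemma~\ref{lem:no_crossing} together with an exchange argument in the style of Lemma~\ref{lem:struct}, I show that any optimal strategy admits a split vertex $x_r\in\gamma$ with the property that no $u$-walk visits the edge $(x_r,x_{r+1})$ and no $v$-walk visits $(x_{r-1},x_r)$: indeed, a $u$-walk and a $v$-walk traversing a common edge of $\gamma$ would do so in opposite directions, contradicting Lemma~\ref{lem:no_crossing}, and a similar swap argument excludes the possibility that a walk's forest excursion crosses $\gamma$ beyond the split into the opposite side. Hence all $u$-walks lie in $T_{u,v}[r]$ and all $v$-walks lie in $T_{v,u}[m-r+1]$, and the subtree $T_r$ (attached to $\gamma$ at $x_r$) is the only region shared by the two sides. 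For such a split, the cost of the strategy decomposes cleanly into (i) the $u$-side cost of covering $T^A_r$ with $j_u$ robots reaching $x_r$, (ii) the symmetric $v$-side cost for $T^B_r:=\{x_r\}\cup\{x_{r+1},\ldots,x_m\}$ together with $T_{r+1},\ldots,T_m$ with $j_v$ robots reaching $x_r$, and (iii) the cost of the $j_u+j_v$ robots congregating at $x_r$ jointly covering $T_r$. This yields
\[
\textproc{Two-sources}[T,u,v,s,t]
 \;=\; \min_{\substack{1\le r\le m\\ 0\le j_u\le s,\ 0\le j_v\le t}}
\Big[\mathcal{A}[r,s,j_u]+\mathcal{B}[r,t,j_v]+\textproc{One-source}[T_r,x_r,j_u+j_v]\Big].
\]

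For each of the $O(k^2)$ pairs $(s,t)$ the enumeration over $(r,j_u,j_v)$ uses $O(m\cdot k^2)=O(n k^2)$ table lookups, so the combining phase runs in $O(n k^4)$ time; together with the $O(n k^3)$ preprocessing this gives the claimed $O(n k^4)$ bound. I expect the main obstacle to be the structural step: ruling out, via Lemma~\ref{lem:no_crossing} and a careful exchange argument, that a walk's path or forest extends past the meeting vertex along $\gamma$ into the opposite side. A secondary subtlety worth verifying is that using the ``at least $j_u$'' semantics of $\mathcal{A}$ (rather than ``exactly $j_u$'') still produces the correct minimum in the combined formula: when $j_u,j_v$ equal the actual numbers of robots reaching $x_r$ in an optimal strategy, the value $\mathcal{A}[r,s,j_u]$ is necessarily attained by a configuration with exactly that many robots at $x_r$ (otherwise a cheaper overall strategy could be built, contradicting optimality), so the outer minimization captures the true optimum.
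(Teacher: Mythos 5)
Your overall architecture (precompute \textproc{One-source} and the two directed \textproc{Destination-path} tables, then combine by enumerating a split vertex on $\gamma$ with an $O(k^2)$ inner minimization per entry) matches the paper's, and the complexity accounting is right. However, there is a genuine gap in the combining formula. You require \emph{both} sides to cover the split vertex: $\mathcal{A}[r,s,j_u]$ must cover $T_{u,v}[r-1]\cup\{x_r\}$ and $\mathcal{B}[r,t,j_v]$ must cover $\{x_r\}\cup\{x_{r+1},\dots,x_m\}\cup T_{r+1}\cup\dots\cup T_m$, so in every term of your minimization some $u$-robot \emph{and} some $v$-robot reach $x_r$. But an optimal strategy need not have the two groups meet at any vertex of $\gamma$ at all: a covering strategy only has to visit every vertex, so the edge separating the two groups' territories may simply never be traversed. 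Concretely, take $T$ to be the path $u=x_1,x_2,x_3=v$ with $s=t=1$. The optimum has cost $1$ (the robot at $v$ walks to $x_2$; the robot at $u$ stays put). Your formula gives $2$ for every choice of $r\in\{1,2,3\}$: e.g.\ at $r=1$, $\mathcal{B}[1,1,j_v]$ forces the $v$-robot to walk all the way to $x_1$, costing $2$, even though $x_1$ is already covered for free by the $u$-robot. So the recurrence computes a strict overestimate in general.

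The paper sidesteps this by parametrizing on the \emph{last edge} $(x_{i-1},x_i)$ of $\gamma$ traversed by a robot starting at $u$ (with a separate index value for ``no such edge''), and by including in the recurrence a case
\[
\textproc{Destination-path}[T,u,v,x_i,s,0]+\textproc{Destination-path}[T,v,u,x_{i+1},t,0],
\]
in which the $v$-side is only responsible for the portion of the tree from $x_{i+1}$ onward and never touches $x_i$ or $T_i$; the latter are absorbed into the $u$-side table with the ``$j=0$'' (no robot required to end at $x_i$) semantics. To repair your argument you would need the asymmetric truncation: one of your two tables should stop at $x_{r}$ and the other should start at $x_{r+1}$, with the shared subtree $T_r$ and the optional crossing handled by explicit cases rather than by forcing both groups to congregate at $x_r$.
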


\begin{proof}
First, we use Theorem~\ref{thm:k-robots-same-node} to compute in $O(k^2 n)$ time the set
\[\left \{ \textproc{One-source}[T_i,x_i,l]\colon 1 \le i \le m \textrm{ and } 1 \le l \le k \right \}.\]
We now use Theorem~\ref{thm:at-least} to compute in $O(k^3n)$ time the sets
\[\left \{ \textproc{Destination-path}[T,u,v,x_{i},k',j]\colon 0 \le j\le  k' \le k, \textrm{ and }  1 \le i \le m \right \}\] 
and
\[\left \{ \textproc{Destination-path}[T,v,u,x_{i},k',j]colon 0 \le j\le  k' \le k, \textrm{ and }  1 \le i \le m \right \}.\] 
Note that in the first set the robots start at $u$, and in the second set the
robots start at $v$.

For every triple of integers $1 < i \le m$, $0 \le s,t \le k$, let $C[i,s,t]$ be the cost of an optimal
covering strategy for $T$ in which $s$ robots start at $u$, $t$ robots
start at $v$ and $(x_{i-1},x_{i})$ is the last edge of $\gamma$ that is visited
by a robot starting at $u$.  Let $C[1,s,t]$ be the cost when no robot starting at $u$ visits an edge of $\gamma$.
Note that $\textproc{Two-sources}[T,u,v,s,t]$ is equal to the minimum of $C[i,s,t]$ over all $1 \le i  \le m$.

By Lemma~\ref{lem:no_crossing}, we know that in an optimal covering strategy, at most one vertex of $\gamma$ is visited by both a robot
starting at $u$ and a robot starting at $v$. Therefore, 
$C[i,s,t]$ can be computed from $\textproc{Destination-path}[T,u,v,x_i,s,j]$ and
$\textproc{Destination-path}[T,v,u,x_i,t,j]$ as follows:
\begin{itemize}
 \item \emph{$i=1$.}
 
 In this case, no robot starting at $u$ traverses the edge $(x_1,x_2)$. Therefore, at
 least one robot starting at $v$ visits $x_2$, and $T_1$ is visited by the $s$
 robots starting at $u$ and possibly some robots starting at $v$. Thus,
 $C[1,s,t]$ is equal to the minimum of 
 \[\textproc{One-source}[T_1,x_1,s+j]+\textproc{Destination-path}[T,v,u,x_2,t,j]+j,\]
 over all  $0 \le j \le t$. 
 \item \emph{$i=m$.}
 
 In this case, no robot starting at $v$ traverses the edge $(x_{m-1},x_m)$. Therefore, at
 least one robot starting at $u$ visits $x_m$ and $T_m$ is visited by the $t$
 robots starting at $v$ and possibly some robots starting at $u$. 
 Thus, $C[m,s,t]$ is equal to the minimum of 
 \[\textproc{One-source}[T_m,x_m,t+j]+\textproc{Destination-path}[T,u,v,x_{m-1},s,j]+j,\]
 over all  $0 \le j \le t$. 
 
 \item \emph{$1 < i < m$.}
 
  In this case, at least one robot starting at $u$ enters $T_{i}$. If a robot
  starting at $u$ visits the edge $(x_{i-1},x_{i})$ twice, then, by Lemma~\ref{lem:no_crossing}, no robot
  starting at $v$ can enter $T_{i}$.
  Therefore, $C[i,s,t]$ is equal to the minimum of 
  \[\textproc{Destination-path}[T,u,v,x_{i},s,0]+\textproc{Destination-path}[T,v,u,x_{i+1},t,0]\]
  and
  \begin{align}
   & \textproc{Destination-path}[T,u,v,x_{i-1},s,j]+\textproc{One-source}[T_i,x_i,j+l]+ \nonumber \\
   & \textproc{Destination-path}[T,v,u,x_{i+1},t,l]+j+l\nonumber,   \end{align}
   over all $1 \le j \le s$ and $0 \le l \le t$. 
\end{itemize}

Using dynamic programming, each  entry  can be computed  in $O(k^2)$ time (we have two indices $j$ and $l$).
Thus, computing the whole table can be done in $O(k^4n)$ time.

\end{proof}

\section{Minimum Length Covering Problem with Rendezvous (MLCPR)}

In this section, we consider that case in which the robots have to rendezvous at most every $p>0$
steps. We assume that all robots start at a given vertex and end at a common vertex. We prove that 
the problem of finding a minimum-length covering strategy with rendezvous
on a tree $T$ is NP-hard. Let us formalize the corresponding decision problem (Length Covering Strategy with Rendezvous).

\begin{problem}[LCSR]
 Let $T$ be a tree and let $u$ be a vertex of $T$. Let $k, \ell$ and $p$ be positive
 integers. Decide whether there exists a covering strategy $S$ for $T$, with $k$ robots starting at $u$, such 
 that the robots rendezvous at most every $p$ steps and $l(S)$ is at most $\ell$.
\end{problem}

We can prove that the LCSR problem is NP-complete by a reduction from 3-PARTITION. Thus, MLCPR is NP-hard.

\begin{problem}[3-PARTITION]
Let $B$ be a positive integer. Let $A$ be a set of $3m$ positive integers such that
$\frac{B}{4} < a < \frac{B}{2}$ for all $a \in A$, and $\sum_{a \in A} a=mB$.
The 3-PARTITION problem asks whether $A$ can be partitioned into $m$ sets $A_1,\dots, A_m$
such that for each $1 \le i \le m$, $\sum_{a \in A_i} a=B$.
\end{problem}
Note that in such a partition, each $A_i$ must consist of three elements. This problem
is known to be strongly NP-complete~\cite{multiprocessor}. This implies
that it remains NP-complete even when $B$ is bounded by a polynomial on
$m$. The following result can be stated.

\begin{theorem}
 The LCSR Problem is NP-complete.
\end{theorem}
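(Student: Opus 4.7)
The plan is to show that LCSR is in NP and then reduce 3-PARTITION (which is strongly NP-complete) to it. For NP membership, a certificate is a strategy $S=(W_1,\dots,W_k)$ encoded as sequences of vertices; verification checks in polynomial time that each $W_i$ is a legal walk in $T$, that $\bigcup_i V(W_i)=V(T)$, that at every rendezvous time all robots share a common vertex and consecutive rendezvous times differ by at most $p$, and that $l(S)\le\ell$. The walks may without loss of generality be assumed to have time polynomial in the input size, since trailing stays at the end vertex can be trimmed without affecting length.

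For NP-hardness, given a 3-PARTITION instance $(A=\{a_1,\dots,a_{3m}\},B)$ I would construct a tree $T$ encoding the $3m$ items as pendant paths of lengths $a_1,\dots,a_{3m}$ attached to (or reached from) the starting vertex $u$, together with auxiliary structure (for example a spine of length $M$) whose role is to impose an additive per-round overhead $2M$ on every round that a robot performs covering work. I set $k=m$, $p=2M+2B$, and $\ell$ equal to the total length $2mM+2mB$ of the canonical one-round strategy induced by a valid 3-partition. Since 3-PARTITION is strongly NP-complete, $B$ and $M$ are polynomial in $m$, so the reduction is polynomial-time. The forward direction is direct: given a 3-partition $A_1,\dots,A_m$, assign triple $A_j$ to robot $j$, which traverses the auxiliary structure and then performs three round-trips on the pendants of $A_j$ (in some order) before returning; its walk has length $2M+2B$ and fits in time $p$, all robots rendezvous at $u$ at time $p$, and the total length is exactly $\ell$.

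The backward direction is the crux and main obstacle. Starting from any strategy with $l(S)\le\ell$ meeting the rendezvous condition, I would first apply Lemma~\ref{lem:struct} together with a short exchange argument to reduce to strategies where each pendant edge is traversed exactly twice by exactly one robot in a single active round, and every rendezvous occurs at $u$ (this is where the auxiliary structure must be carefully designed so that any alternative rendezvous vertex strictly increases length). Decomposing each robot walk into active rounds (closed walks at $u$ of time at most $p$, covering after spine traversal pendants of total mass at most $B$), and writing the total length as $2mB+2M\cdot A$ where $A$ is the total number of active robot-rounds across all robots, the budget $\ell=2mM+2mB$ forces $A\le m$ while coverage of total pendant-mass $mB$ in rounds of capacity $B$ forces $A\ge m$; hence $A=m$ and each active round covers pendant-mass exactly $B$. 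Combined with the hypothesis $a_i\in(B/4,B/2)$, each active round must contain exactly three pendants, and reading off these $m$ triples yields the desired 3-partition. The principal technical difficulty I expect is designing the auxiliary structure so that rendezvousing at any vertex other than $u$ is strictly more expensive than the canonical strategy, so that the per-round-overhead counting argument is binding and cannot be bypassed by clever rendezvous choices.
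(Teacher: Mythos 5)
Your overall strategy (reduction from strongly NP-complete 3-PARTITION, items as pendant paths at a hub $u$, an auxiliary gadget that charges each covering round a fixed overhead, and a counting argument forcing exactly $m$ rounds of pendant-mass exactly $B$) is the same as the paper's. But the step you yourself flag as the ``principal technical difficulty'' is not a detail to be filled in later; it is the entire content of the reduction, and the construction you actually sketch does not survive it. With a spine of length $M$ from the start vertex to the hub $u$, $k=m$, $p=2M+2B$ and $\ell=2mM+2mB$, consider the strategy in which all $m$ robots march together from the start to $u$ (this covers the spine at total cost $mM$ and respects the rendezvous bound since $M\le p$), and then cover the pendants one at a time, rendezvousing at $u$ after every single pendant round-trip (each such round takes $2a_i<B<p$ steps), finally all ending at $u$. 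Its length is $mM+2mB<\ell$, it satisfies every rendezvous constraint, and it exists for every instance of 3-PARTITION, so your backward direction fails: the per-round overhead $2M$ is simply never paid after the first traversal, because nothing forces the robots to return to the start between rounds. The claimed identity that the total length equals $2mB+2M\cdot A$ with $A$ the number of active rounds is therefore false for feasible strategies, and the bound $A\le m$ does not follow.

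The paper closes exactly this hole with a different gadget placed on the other side of the start vertex: the robots start at $v_1$, the hub $u$ is adjacent to $v_1$, and a path $Q$ of length $3B+3$ hangs off $v_1$, with $p=2B+2$. Because the far tip of $Q$ is more than $p$ steps from $u$ and from $v_1$, the robot that covers that tip can only rendezvous again deep inside $Q$ (at $v_{2B+2}$); since all robots must also meet at the end of their walks, every robot is forced to traverse a long prefix of $Q$, and this mandatory cost is what makes the length budget $\ell$ tight. The remaining slack then permits the pendant region $T'$ to be entered only via closed excursions from $v_1$ of time at most $p=2B+2$ each, and only $m$ of them in total, which is what yields the 3-partition; the paper also uses $k=m+1$ robots, with one robot idling during the pendant phase, to make this accounting work. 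So your proposal identifies the right skeleton but omits the one gadget and the one argument that make the reduction sound; as written, the reduction is incorrect.
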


\begin{proof}
 Given a strategy, it can be verified in polynomial time whether it satisfies that the robots rendezvous at most every $p$ steps. Since the length of a given strategy can also be computed in polynomial time, we have that LCSR is in NP. 
 
 Let $(A,B)$ be an instance of $3$-PARTITION such that
 $|A|=3m$ and $B$ is bounded by a polynomial on $m$.
 In the following, we construct (in polynomial time) an
 instance of LCSR that has a solution if and only if $A$ admits
 a $3$-PARTITION.
 
Given $A:=\{a_1,\dots, a_{3m}\}$, we consider the following tree $T$. See Figure \ref{np-length}. Let $P_1, \dots,P_{3m}$ be $3m$ paths, starting at the same vertex $u$, such that $l(P_i)=a_i$.
 Let \[T'= \bigcup_{i=1}^{3m} P_i.\]
 Let $Q=(v_1,\dots,v_{3B+4})$ be a path of length $3B+3$. 
 Finally, let \[T:=T'\cup Q \cup (u,v_1).\]
  Note that $T'$ has $mB$ edges and since $B$ is bounded by a polynomial on $m$, $T$ can be constructed
 in polynomial time. Let $k:=m+1$, $p:=2B+2$ and 
 \[\ell=m(2B+2)+(m+1)(2B+2)+(2B+2)=(2m+2)(2B+2)\]
 
 \begin{figure}[ht!]
		\centering
		\includegraphics[width=.85\linewidth]{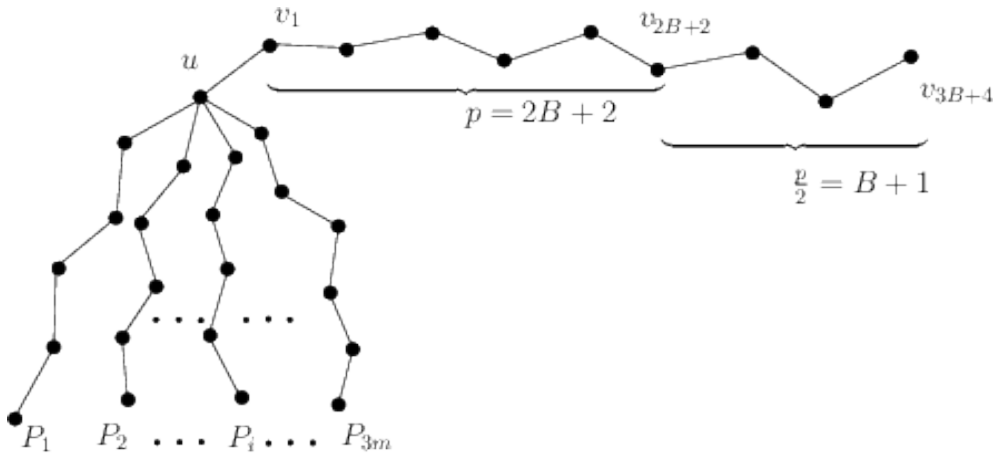}
		\captionsetup{width=.85\linewidth}
		\caption{The construction of the instance T. }
		\label{np-length}
	\end{figure}

 We claim that $T$ has a covering strategy with $k$ robots starting at $v_1$, that rendezvous at most
 every $p$ steps, and of length at most $\ell$ if and only if $A$ admits a $3$-PARTITION.
 Recall that all robots must rendezvous at the end of their corresponding walks.

 First, let $A_1, \dots, A_m$ be a $3$-partition of $A$. 
 The covering strategy of $T$ is as follows. Consider $m$ robots covering $T'$ with rendezvous at $v_1$ with length $2mB+2m$ (using the $3$-partition. The $i$-th robot covers the 3 paths associated with $A_i$) while a robot waits at $v_1$ for $2mB+2m$ steps. After that, the $m+1$ robots meet at $v_{2B+2}$ and one robot visits the last vertex of $Q$ and joins the rendezvous at $v_{2B+2}$ again.
 It is easy to see that the length of such strategy is exactly $\ell$. 

Conversely, 
let $S$ be a covering strategy of $T$ with $k$ robots starting at $v_1$  that rendezvous at most every $p$ steps, and of length at most $\ell$. We prove that $A$ admits a $3$-PARTITION.
 
 We claim that in $S$, all the robots reach the vertex $v_{2B+2}$. Suppose, on the contrary, that at least one robot
 does not visit $v_{2B+2}$. Thus, since the path from $v_{2B+2}$ to $v_{3B+4}$ is of length $B+1$, the robot visiting the last vertex
 of $Q$ is not able to rendezvous with the other robots again, a contradiction. 
Thus, $S$ spends at least length 
 $(m+1)(2B+2)+(2B+2)$ to cover $Q$.
 
%

On the other hand, we also claim that the edges of $T'$ are visited at least twice. Suppose, on the contrary, that an edge of $P_i$ is visited once. Therefore, the robots reach their final position at $P_i$. Since the length of $P_i$ is less than $\frac{B}{2}$, then at least $mB-\frac{B}{2}$ edges of $T'\setminus P_i$ must be visited twice.  Moreover, since 
 the robots do not reach their final position at some vertex of $Q$, then the edges of $Q$
 from $v_1$ to $v_{2B+2}$ must be visited at least twice by all the robots and the edges from $v_{2B+2}$ to $v_{3B+3}$ must be visited twice by at least one robot. Note that in this case, all the robots visit $(u,v_1)$ at least once. Thus, the length of $S$ is at least 
 \[2(m+1)(2B+2)+(2B+2)+(m+1)+2(mB-\frac{B}{2})>\ell,\] a contradiction.
 Therefore, the claims above imply that \[l(S)\ge 4mB+4B+2m+4\] and then at least one robot does not visit $(u,v_1)$ (due to the length constraint).

As a consequence, when a robot enters $T'$ through $(v_1,u)$, in order
 to rendezvous with the other robots it must exit $T'$ through $(v_1,u)$ again.
 Moreover, it must rendezvous with the other robots in at most $2B+2$ steps; thus the robot
 visits at most $3$ endpoints of the paths $P_1,\dots,P_m$, and afterwards exits through $(u,v_1)$.
 Since there are $3m$ such paths, this happens exactly $m$ times. Partition $A$ according to these visits
so that a set $\{a_i,a_j,a_k\}$ in this partition corresponds to a robot
 entering $T'$ and visiting the endpoints of $P_i$, $P_j$ and $P_k$ before exiting $T'$ to
 rendezvous with the other robots. Note that  for all $A_i$, \[\sum_{a \in A_i}a\le B.\]
 Since  \[\sum_{a \in A}a =mB,\] this implies that for all $A_i$, \[\sum_{a \in A_i}a = B,\] 
 and $A_1, \dots, A_m$ is a $3$-partition of $A$.

 \end{proof}
%
%
%
%
%
%
%
%
%
%
%
%
%
%
%

\section{Minimum Time Covering Problem (MTCP)}
We prove that the problem of computing a minimum time covering strategy
is NP-hard, regardless of whether periodic rendezvous are required.
The corresponding decision problem (time covering strategy) is as follows.

\begin{problem}[TCS]
 Let $T$ be a tree, with $k$ robots at given starting positions and let $t$ be a positive
 integer. Decide whether there exists a covering strategy $S$ for $T$ with these robots so that $t(S)$ is at most $t$.
\end{problem}


 

To prove the NP-completeness of the TCS problem, we also use a reduction from 3-PARTITION. 

\begin{theorem}
 The TCS Problem is NP-complete.
\end{theorem}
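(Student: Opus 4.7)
The plan is to reduce the strongly NP-complete 3-PARTITION problem (already used for MLCPR) to TCS. Membership in NP is immediate, since any tuple of walks is a polynomial-size certificate whose coverage and time can be verified directly. Given a 3-PARTITION instance $(A,B)$ with $|A|=3m$, I build a tree $T$ rooted at a central vertex $u$ with two families of pendant paths attached at $u$: $3m$ ``item paths'' $P_1,\dots,P_{3m}$, with $P_i$ of length $a_i$, and $m$ ``burden paths'' $Q_1,\dots,Q_m$, each of length $D:=B+1$ with far endpoints $v_1,\dots,v_m$. All $k:=m$ robots start at $u$ and I ask whether there exists a covering strategy of time at most $t:=2B+D$. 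Because $B$ is polynomial in $m$ under strong NP-hardness, $T$ has polynomial size and the reduction is polynomial.

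For the forward direction, given a 3-partition $A=A_1\cup\cdots\cup A_m$ with $\sum_{a\in A_j}a=B$, I assign robot $j$ the walk that performs round-trip excursions from $u$ to the leaves of the three item paths in $A_j$ and then walks once from $u$ to $v_j$. Each such walk has time exactly $2B+D$, and together the $m$ walks cover every vertex of $T$.

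The backward direction carries the main content and splits into two claims. Let $b_i$ be the number of burden vertices visited by robot $i$. First, since the $m$ burden subtrees meet only at $u$ and each has depth $D$, any walk visiting two burdens needs time at least $3D=3B+3>t$; thus $b_i\le 1$. Since every burden must be covered, $\sum_i b_i\ge m$, forcing $b_i=1$ for every robot and inducing an assignment of burdens to robots. Second, conditional on this assignment, if robot $i$ covers items $J_i$ with $S_i:=\sum_{j\in J_i}a_j$ together with its assigned burden, its walk decomposes into out-and-back excursions from $u$ into the relevant pendant subtrees, and the minimum walk time equals twice the total length of the visited subtrees minus the length of the single non-repeated final excursion. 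Because $D>B/2>a_j$ for every $j$, ending at the burden is optimal and gives time exactly $2S_i+D$. The makespan bound $2S_i+D\le t$ then forces $S_i\le B$; summing and using that every item must be covered by some robot yields $\sum_i S_i\ge mB$, hence $S_i=B$ for every $i$ and the $J_i$'s partition $A$. The assumption $a_j\in(B/4,B/2)$ finally forces $|J_i|=3$, so $(J_1,\dots,J_m)$ is a valid 3-partition.

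The step I expect to be the main obstacle is rigorously justifying the formula $2S_i+D$ in the second claim: one must argue that no clever interleaving of item and burden visits beats the ``round-trip-except-last'' pattern. This is essentially the ``open Chinese postman on a tree'' fact: in a time-optimal walk of a single robot on a tree, no edge is traversed more than twice, and the unique unbalanced excursion (traversed only once) must be the longest in order to minimize time. A short exchange argument entirely analogous to the one in the proof of Lemma~\ref{lem:struct} proves this directly.
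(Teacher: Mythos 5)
Your reduction is correct and is essentially the paper's own: the same star-shaped gadget of $3m$ item paths of lengths $a_i$ plus $m$ long ``burden'' paths hanging from a common root $u$, with $k=m$ robots, one burden per robot, and the counting via $B/4<a<B/2$ at the end. The only substantive difference is that you set the burden length to $D=B+1$ rather than the paper's $L=2\sum_i a_i$; the paper's larger value makes it immediate that each robot must end its walk at its burden leaf and can do nothing afterwards, whereas your smaller value requires the explicit ``twice the total depth minus the deepest excursion'' lower bound on the time of a single walk visiting a prescribed set of pendant leaves --- a bound you correctly identify as the crux and which does follow from the tree-metric/exchange argument you sketch, so both variants go through.
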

\begin{proof}
 Computing the time of a given covering strategy for $T$ can be done
 in polynomial time; thus we have that TCS is in NP. 
 
 Let $(A,B)$ be an instance of $3$-PARTITION such that
 $|A|=3m$ and $B$ is bounded by a polynomial on $m$.
 We construct (in polynomial time) an
 instance of TCS that has a solution if and only if $A$ admits
 a $3$-PARTITION.
 
 The instance of  TCS is as follows. Given $A:=\{a_1,\dots, a_{3m}\}$, let $P_1, \dots,P_{3m}$ be $3m$ paths,
 all starting at the same vertex $u$, such that $l(P_i)=a_i$.
 Let $L=2\sum_{i=1}^{3m} a_i$. Let $Q_1,\dots,Q_m$ be $m$ paths,
 each of length $L$ and starting at $u$. 
 Finally, let \[T:=\left ( \bigcup_{i=1}^{3m} P_i \right ) \cup \left( \bigcup_{i=1}^m Q_i \right ),\]
 $k:=m$ and $t=L+2B$. Note that since $B$ is bounded by a polynomial on $m$, $T$ can be constructed
 in polynomial time. We claim that $T$ has a covering strategy of time at most $t$,
 with $k$ robots all starting at $u$ if and only if $A$ admits a $3$-partition.
  
   Let $A_1,\dots, A_m$ be a $3$-partition of $A$. Denote by $P^{-1}$ the path $P$ traveled in opposite direction. 
   For $1 \le i \le m$, let $A_i:=\{a_{i_1},a_{i_2}, a_{i_3}\}$, let $x_i$ be the last
 vertex of $P_i$ and let $P_i'$ be the subpath of $P_i$ formed by its internal vertices.
 Thus $P_i=uP_i'x_i$.  Now, for $1 \le i \le m$, let \[W_i:=P_{i_1}P_{i_1}'^{-1}P_{i_2}P_{i_2}'^{-1}P_{i_3}P_{i_3}'^{-1}Q_i.\]

 
 Notice that each  $W_i$ spends time $L+2B$. Therefore, $(W_1,\dots,W_m)$ is a covering strategy
 for $T$ of time at most $t$, with $k$ robots all starting at $u$.
 
Conversely, let $(W_1,\dots,W_m)$ be a covering strategy for $T$ of time at most $t$, with $k$ robots
 all starting at $u$. We claim that a robot visits at most one leaf of the path $Q_i$. Otherwise,
 its walk would spend at least $3L > L+2B=t$ time. Note that $L=2Bm>B$. The above property allows us to match the $m$ robots with the paths $Q_i$.
Now, let $q(i)$ be the index
 such that the $i$-th robot visits the last vertex of $Q_{q(i)}$. 
 It is easy to see that the $i$-th robot cannot end at a vertex not in $Q_{q(i)}$, otherwise
 the time of its walk $W_i$ would be at least $2L+1=L+2Bm+1>t$ for $m>0$. We may assume that the walk $W_i$
 ends at the end-vertex of $Q_{q(i)}$, because before reaching
 this vertex, it visits all the interior vertices of $Q_{q(i)}$.
 We may also assume that no two robots visit the same path $P_i'x_i.$
 Therefore, each walk $W_i$ is of the form \[W_i:=P_{i_1}P_{i_1}'^{-1}P_{i_2}P_{i_2}'^{-1}P_{i_3}P_{i_3}'^{-1}\cdots P_{i_s}P_{i_s}'^{-1}Q_{q(i)},\]
 for some indices $i_1, i_2,\dots, i_s$. Since 
 \[l(P_{i_1}P_{i_1}'^{-1}P_{i_2}P_{i_2}'^{-1}\cdots P_{i_s}P_{i_s}'^{-1}u)\le 2B\] and
 \[l(P_{j}P_{j}'^{-1}u) > \frac{B}{2} \quad (a_i>\frac{B}{4}),\]
 we have that $s=3$.
 For $1\le i \le m$, let $A_i:=\{a_{i_1},a_{i_2}, a_{i_3}\}$.
 Now, \[a_{i_1}+a_{i_2}+a_{i_3}=l(P_{i_1}P_{i_1}'^{-1}P_{i_2}P_{i_2}'^{-1}P_{i_3}P_{i_3}'^{-1}u)/2\le B.\]
 Finally, since \[\sum_{i=1}^m a_i=mB,\] actually
 \[a_{i_1}+a_{i_2}+a_{i_3}=B,\] and the result follows.
 
\end{proof}
%
%
%
%
 \section{Conclusions}

In this paper, we have studied some optimization problems for covering a terrain (modeled by a tree) by using a team of $k>1$ robots. 
We addressed two variants of the problem, minimizing the time it takes to cover the terrain, and minimizing the total distance traversed by the robots.
We also considered the problem when a periodic rendezvous is or is not required. We showed that for the non-rendezvous case, the two variants are different.  The cover time problem is NP-hard (this implies the NP-hardness for the rendezvous case) while the cover length problem is polynomial. 
Moreover, we proved that the cover length variant with rendezvous
becomes NP-hard. 
 
We showed how to efficiently solve the Minimum Length Covering Problem (MLCP) when the robots start at one or two given vertices. The main open question is therefore whether the MLCP starting at $s$ arbitrary positions is NP-hard when $s$ is part of the input.
Finally, from a practical standpoint, 
efficient heuristics with good performance should be of interest for the hard covering problems illustrated in Figure~\ref{tree}.
%



\bibliographystyle{plain}
\bibliography{exploringbib}
\end{document}